 \def\p{\partial}
 \def\ubar{\bar{u}}
 \def\ub{\mathbf{u}}
 \def\ubarb{\bar{\mathbf{u}}}
 \def\xb{\mathbf{x}}
 \def\phib{\boldsymbol{\phi}}
 \def\mb{\mathbf{m}}
 \def\kb{\mathbf{k}}
 \newtheorem{theorem}{Theorem}[section]
 \newenvironment{proof}[1][Proof]{\begin{trivlist}\item[\hskip \labelsep
{\bfseries #1}]}{\end{trivlist}}
 \newcommand{\centre}[2]{\multispan{#1}{\hfill #2\hfill}}
\newcommand{\eref}[1]{ (\ref{#1}) }
\newcommand{\dotted}{\protect\mbox{${\mathinner{\cdotp\cdotp\cdotp\cdotp\cdotp\cdotp}}$}}
\newcommand{\dashed}{\protect\mbox{-\; -\; -\; -}}
\newcommand{\chain}{\protect\mbox{--- $\cdot$ ---}}
\newcommand{\dashddot}{\protect\mbox{--- $\cdot$ $\cdot$ ---}}
\newcommand{\full}{\protect\mbox{------}}
\title{A Regularization of Burgers Equation using a
Filtered Convective Velocity}
\author{
 Greg Norgard %
    \thanks{Graduate Student,Department of Applied Mathematics,
   526 UCB.} 
 and Kamran Mohseni%
   \thanks{Associate Professor, Department of Aerospace Engineering Sciences
 and RECUV,Room ECAE 175, and AIAA Member.}\\
  {\normalsize\itshape
  University of Colorado, Boulder, Colorado, 80309, US}
 }
\begin{document}
\maketitle

\vspace*{10pt}
\indent
\textit{\footnotesize Manuscript prepared for submission to the Journal of Physics A, special edition D$^2$HFest based on a related presentation at Darryl D. Holm 60th Birthday Celebration, Lausanne Switzerland, July 22-28, 2007.}

\begin{abstract}
This paper examines the properties of a regularization of Burgers equation in one and multiple dimensions using a filtered convective velocity, which we have dubbed as convectively filtered Burgers (CFB) equation.  A physical motivation behind the filtering technique is presented.  An existence and uniqueness theorem for multiple dimensions and a general class of filters is proven.  Multiple invariants of motion are found for the CFB equation and are compared with those found in viscous and inviscid Burgers equation. Traveling wave solutions are found for a general class of filters and are shown to converge to weak solutions of inviscid Burgers equation with the correct wave speed.  Accurate numerical simulations are conducted in 1D and 2D cases where the shock behavior, shock thickness, and kinetic energy decay are examined. Energy spectrum are also examined and are shown to be related to the smoothness of the solutions.

\end{abstract}

\maketitle

\section{Introduction}
The Euler and Navier-Stokes equations are well known as the fundamental laws governing fluid dynamics, yet even after 200 years they continue to present theoretical and computational challenges.  The nonlinear terms inherent in the equations give rise to small scale structures, in the form of turbulence and shocks, which have proven to be the bane of computational simulations.  With the proper modeling of these small scales, we hope it is possible to address both turbulence and shocks with one encompassing technique.  Currently, the Lagrangian averaging approach is making strides in handling turbulent flows \cite{Holm:04a,Foias:01a,FoiasC:02a,Marsden:01h,Mohseni:03a,Mohseni:04c,Mohseni:05e,Mohseni:06l,Mohseni:05d,Marsden:98b,Chen:99b}. That work motivated the work presented in this paper. The Lagrangian averaging approach results in a filtered convective velocity in the nonlinear term.  This paper also uses a filtered convective velocity in the nonlinear term of Burgers equation, with the intention of discovering if this technique is a reasonable means of capturing shock formation.
%

Burgers equation has been a useful testing grounds for fluid dynamics for many
years due to the fact that it shares the same nonlinear convective term as Euler
and Navier-Stokes equations.  Indeed, Burgers equation has been the focus of much work, both numerically and analytically \cite{BurgersJM:48a, LighthillMJ:56a, ColeJD:51a,WhithamGB:74a, LaxPD:73a, Kraichnan:93a, Holm:04b, TadmorE:04a, OberaiAA:06a}.  Originally Burgers equation was introduced and used as a simplistic model for one dimensional turbulence \cite{BurgersJM:48a, LighthillMJ:56a}.  The original equation,
\begin{equation}
\label{1Dviscousburgers}
u_t+uu_x=\nu u_{xx},
\end{equation}
which shall be referred to as viscous Burgers equation, includes a dissipative viscous term.  By removing the viscous term,
\begin{equation}
 \label{1Dinviscidburgers}
u_t+uu_x=0,
\end{equation}
inviscid Burgers equation is obtained. Much like the Euler equations, inviscid Burgers equation can be expressed as a conservation law. Both viscous and inviscid Burgers equation are easily extended into multiple dimensions giving
\begin{equation}
\label{ndviscousBurgers}
\ub_t+\ub \cdot \triangledown \ub=\nu \bigtriangleup \ub
\end{equation}
and
\begin{equation}
\label{ndinviscidBurgers}
\ub_t+\ub \cdot \triangledown \ub=0.
\end{equation}

It is well established that inviscid Burgers equation forms discontinuities in
finite time, determined by initial conditions \cite{WhithamGB:74a, LaxPD:73a}.  To deal with these discontinuities
weak solutions are introduced.  However, when weak solutions are
introduced, solutions are no longer necessarily unique \cite{LaxPD:73a,
OleinikOA:57a}.  In order to choose the physically relevant solution, an entropy
condition is applied, which one and only one weak solution satisfies. This
physically relevant solution is referred to as the entropy solution.  Lax,
Oleinik, and Kruzkov have examined the entropy condition for conservation laws and expressed it using
different techniques \cite{LaxPD:73a,
OleinikOA:57a, KruzkovSN:70a}.  Each of their entropy conditions can be used in different classes of conservation laws, but can all be applied to inviscid Burgers equation with equivalent results  \cite{LellisCD}.

With the dissipative term, viscous Burgers equation does not form
discontinuities \cite{ WhithamGB:74a,LaxPD:73a}.  Indeed it smooths
discontinuities found in initial conditions, and has a unique infinitely
differentiable solution for all time \cite{ColeJD:51a, HopfE:50a}.  Moreover,
the limit of the solutions as $\nu \to 0$ converge strongly to the entropy
solution of inviscid Burgers equation \cite{LaxPD:73a, OleinikOA:57a,
KruzkovSN:70a}.

Viscous Burgers equation is not the only regularization of inviscid Burgers
equation that converges to the entropy solution.  Many regularizations have been
proposed, typically with the addition of a dissipative term.  Among them are
regularizations with hyperviscosity, filtered viscosity, and a combination of
viscosity and dispersion
\cite{TadmorE:04a,LattanzioC:03a,TadmorE:01a,TadmorE:92a,KondoCI:02a}.

Another famous regularization is the KDV equation which uses a dispersive term,
\begin{equation}
\label{kdv}
u_t+u\,u_x=-\epsilon u_{xxx}.
\end{equation}
This regularizes inviscid Burgers equation in the sense that solutions are now
continuous, however, many oscillations form as $\epsilon \to 0$, requiring a
weak limit for convergence \cite{Lax:83a,Kawahara:70b}.
This limit is not the entropy solution, nor even a weak solution of inviscid Burgers equation \cite{GurevichAV:74a}.

The incompressible isotropic LANS-$\alpha$ equations are given by \cite{Marsden:98b}
\begin{equation}
\label{lansa}
\frac{\p \ub}{\p t} + \ubarb\cdot \triangledown  \ub +u_j \triangledown \ubar_j = -\triangledown P +\frac{1}{Re}\bigtriangleup \ub
\end{equation}
where $\ub$ is defined as 
\begin{equation}
\ub=\ubarb-\alpha^2 \bigtriangleup \ubarb.
\end{equation}
As mentioned earlier this equation has had success capturing some of the behavior for a class of turbulent flows.  It can be seen  in Equation (\ref{lansa}) that LANS-$\alpha$ employs a filtered convective velocity.  What this paper proposes is that it is this filtered convective velocity in the nonlinear term that is successful in modeling small scale behavior.  If this is indeed the case, then such a term could also be used to model shocks alongside turbulence. This paper begins the examination of using such a nonlinear term in shock regularization.  As was stated, Burgers equation has been extensively researched and is well understood.  It forms shocks readily and has been chosen as a testing ground for using a filtered convective velocity for shock regularization.  This technique is not intended to be an analytically or numerically superior method of regularizing shocks, but more as a proof of concept with the intention of extending this method into areas involving both shocks and turbulence.

This paper considers what will be referred to as convectively filtered Burgers
equation (CFB), where the convective velocity in inviscid Burgers equation is
replaced with a filtered (averaged) velocity.  The filtering is done by
convoluting the velocity $u$ with the kernel $g$ (properties of which will be
discussed in section \ref{filters}) resulting in 
\begin{eqnarray}
\label{CFB1Da}
u_t+\ubar u_x=0\\
\label{CFB1Db}
\ubar=g\ast u.
\end{eqnarray}

If an inverse for the filtering exists, the equation, using only the filtered velocity, can be rewritten as
\begin{equation}
\label{smoothonly}
\ubar_t+\ubar\ubar_x=\overline{g^{-1}(\ubar\ubar_x)-\ubar g^{-1}(\ubar_x)},
\end{equation}
where $g^{-1}$ represents the inverse of the low pass filter. Similar to the Large Eddy Simulation (LES) or the Lagrangian Averaged Navier-Stokes-$\alpha$ (LANS-$\alpha$) equations where the effect of small scales (high wavenumbers) on the large scales are modeled by a subgrid scale stress tensor \cite{Hughes:01a,Hughes:01b}. To this end the right hand side of Equation \eref{smoothonly} can be considered a convective subgrid scale stress for regularizing a shock. 

This being said, it is the intention that the filtered velocity, $\ubar$, be considered the physically relevant quantity, that captures the proper macroscopic, or low wavemode, behavior.  In Sections \ref{numerics}-\ref{energynorms}, $\ubar$ from CFB will be compared to the velocity from viscous Burgers equation. The unfiltered velocity, $u$ is intended to capture both the low and high wavemode activity.  In Section \ref{invariantsofmotion} most of the invariants of motion found for CFB involve the unfiltered velocity.

CFB, like Burgers equations, is easily extended into multiple dimensions, with equation

\label{CFB}
\begin{eqnarray}
\label{CFBa}
\ub_t+\ubarb\cdot \triangledown \ub=0\\
\label{CFBb}
\ubarb=g\ast \ub.
\end{eqnarray}

 Special attention is paid to the Helmholtz filter, which gains notice for
several reasons.  It is a common filter, employed in the Leray-$\alpha$ model of
turbulence \cite{Holm:04a,Foias:01a,FoiasC:02a}, Lagrangian Averaged Navier-Stokes (LANS-$\alpha$) \cite{Marsden:01h,Mohseni:03a,Mohseni:04b,Mohseni:04c,Mohseni:05e} and in the Lagrangian averaged Euler-$\alpha$ \cite{Mohseni:05d,Marsden:98b}. It has been shown that one dimensional CFB is Hamiltonian when
the Helmholtz filter is used \cite{Holm:03a, BhatHS:06a}. Also, the inverse of the
Helmholtz filter is known for vanishing or periodic boundary conditions, so it
can be expressed through convolution or using differential notation.  In 1D, the following are equivalent

\begin{eqnarray}
\label{helmholtzfiltera}
 \ubar=\frac{1}{2}e^{-|\cdot|} \ast u\\
\label{helmholtzfilterb}
 u =\ubar -  \ubar_{xx}.
 \end{eqnarray}

When moving to higher dimensions the convolution kernel changes and the double derivative becomes a Laplacian.   Using the Helmholtz filter, Equation \eref{smoothonly} can be simplified to
\begin{equation}
\label{aburgersles}
\ubar_t+\ubar\ubar_x=-3\alpha^2 (I-\alpha^2\p x^2)^{-1} (\ubar_x \ubar_{xx}).
\end{equation}

The CFB equation is not the first look into the use of filtered convective velocities.  Jean Leray proposed using a filtered convective velocity in the Navier-Stokes equations as early as 1934 \cite{LerayJ:34a}.  This was done in an attempt at proving that regular solutions to the Navier-Stokes equations exist and to examine properties of those solutions.  This has led to several investigations into the Leray-$\alpha$ model of turbulence \cite{Holm:04a, IlyinAA:06a, Hanjalic:06a} primarily using the Helmholtz filter. In the projected models for extending our technique into the Euler equations, the filtering occurs in more than just the convective velocity.  At that point the research loses much of the similarity with Leray's work.  This extension into Euler equations will be the topic of following papers.

The CFB equation itself has been previously studied.  The family of equations
\begin{eqnarray*}
u_t+\ubar u_x+b u \ubar_x=0,\\
\ubar=g\ast u
\end{eqnarray*}
where $b$ was a free parameter, has been examined \cite{Holm:03a,Holm:03b}, where $g$ was almost exclusively the Helmholtz filter.  The $b=0$ case reduces to the 1D version of CFB.  Holm and Staley \cite{Holm:03a} established the invariant quantities, $\int u$ and $||u||_\infty$.  They examined peakon solutions and cliff solutions both analytically and numerically.  A Lagrangian structure was shown for the $b\neq 0$ case and a Hamiltonian structure proposed for $b=0$ by Degasperis et al. \cite{Holm:03b}.  Bhat and Fetecau \cite{BhatHS:06a} also examined one dimensional CFB with the Helmholtz filter.  They more fully established the Hamiltonian structure proposed by Degasperis et al.  They also proved existence and uniqueness of a solution, and proved that the solutions converge to a weak solution of inviscid Burgers equation.

This paper extends the results into the use of more general filters and looks into 1D and multiple dimensions.  It also examines many of the physically relevant characteristics of CFB.  Some of the results in this paper have been previously presented in conference papers \cite{Mohseni:06l,Mohseni:07e,Mohseni:07s}. The next section elaborates on the motivation for the averaging (filtering) technique.  Section \ref{filters} establishes the characteristics desired in the filter.   Existence and uniqueness theorems are given in section \ref{ExistenceTheorem} for CFB in multiple dimensions and a general class of filters.  Section \ref{invariantsofmotion} examines the various invariants of motion for CFB and compares them to those found in viscous and inviscid Burgers equation. In section \ref{travelingwavesolution}, a traveling wave solution for general filters is presented and analyzed. Sections \ref{numerics}-\ref{energynorms} deal with numerical simulations and results regarding shock behavior, spectral energy, and energy norms.  All is then followed with concluding remarks.

\section{Shock Formation}\label{shockformation}
We begin by examining the mechanics behind shock formation. The nonlinear convective term $\ub\cdot\triangledown \ub$ generates high wave
modes, by transferring energy into smaller scales as time progresses.  This term
can be found in the Euler and Navier-Stokes equations, where it is responsible
for tilting and stretching of vortical structures \cite{BradshawP:76a}.  It is also found in Burgers
equation.  If this cascade of energy into the smaller scales is left unchecked,
discontinuities will form.  To prevent this, viscosity is added.  Dependent on the
Reynolds number, there exists a length scale where viscosity begins to
dominate the energy cascade, transferring kinetic energy into thermal energy
through dissipation.  In the Navier-Stokes equation, the energy cascade has a
slope of -$\frac{5}{3}$ until this length scale, known as the Kolmogorov
scale, illustrated in figure \ref{kscale}a.  Viscous Burgers
equation has an energy cascade slope of -$2$ until viscosity begins to dominate;
seen in figure \ref{kscale}b \cite{Kraichnan:68a, GurbatovSN:97a}.  The Euler
equations and inviscid Burgers equation are the respective limits as viscosity
goes to zero.  By filtering the convective velocity, the energy cascade is
affected through the nonlinear term.  A low pass filtered velocity will have
less energy in the high wave modes, thus the nonlinear term $\ubarb
\cdot\triangledown \ub$ will generate higher wave modes at a reduced rate.  By
reducing the energy cascade, discontinuities are prevented from forming.

\begin{figure}[!ht]
\begin{center}
\begin{minipage}{0.48\linewidth} \begin{center}
  \includegraphics[width=.9\linewidth]{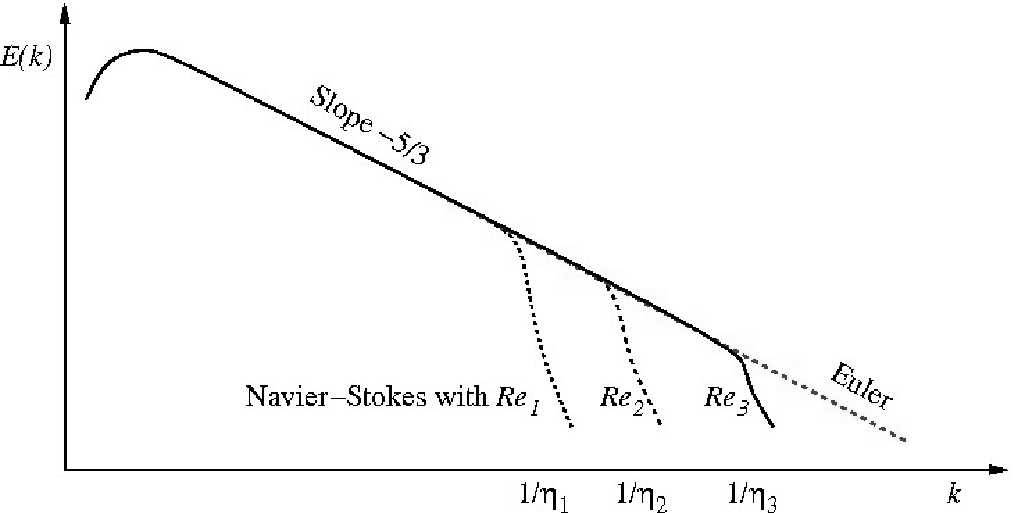}
\end{center} \end{minipage}
\begin{minipage}{0.48\linewidth} \begin{center}
  \includegraphics[width=.9\linewidth]{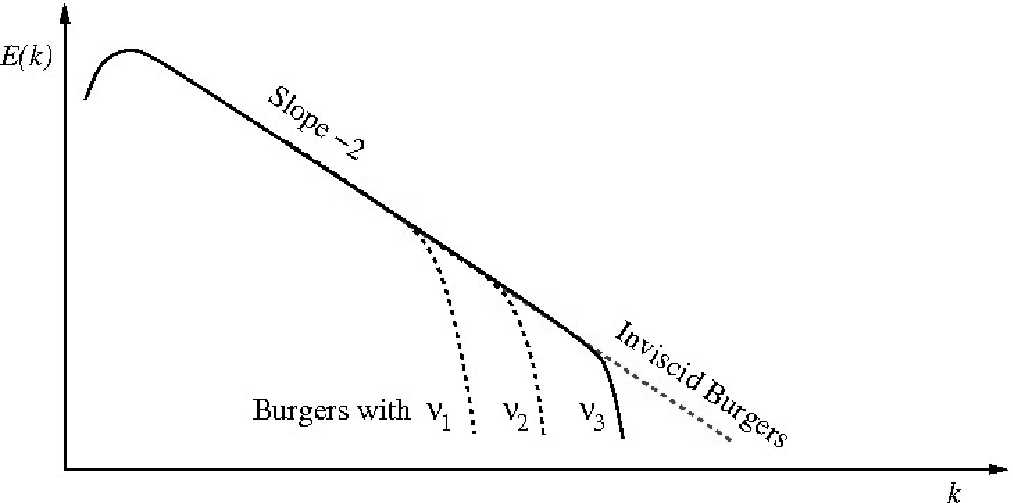}
\end{center} \end{minipage}\\
\begin{minipage}{0.48\linewidth}\begin{center} (a) \end{center} \end{minipage}
\begin{minipage}{0.48\linewidth}\begin{center} (b) \end{center}
\end{minipage}\vspace{-2mm}
\caption{Energy cascade sketches for Navier-Stokes/Euler equations and viscous/inviscid Burgers equation.  (a) Energy cascades from high wavelengths to lower wavelength at a
predicted rate of -$\frac{5}{3}$ for Navier-Stokes/Euler equations.  Kinetic
energy drops drastically upon
reaching a certain wavelength.  That wavelength $\eta$ is referred to as the
Kolmogorov scale.  Here $Re_1 < Re_2 <Re_3$. (b) For Burgers equation the energy
cascade has a slope of -$2$ until viscosity begins to exert its influence. Here
$\nu_3<\nu_2<\nu_1$.}
\label{kscale}
\end{center}
\end{figure}

\section{Filters}\label{filters}
In Equation \eref{CFBb}, the filtered velocity is obtained by convoluting the unfiltered
velocity $\ub$ with a low pass filter $g$.  This section discusses the
requirements and properties of the low pass filter.  Using physical and
analytical arguments the class of relevant filters is outlined.

\subsection{Filtering as a Weighted Average}
From a physical perspective, Equations \eref{CFBa} and \eref{CFBb} describe a
fluid where the convective velocity of a particle is the weighted average  of
the velocities of the particles around it.  In such an averaging, there are
several guidelines that seem intuitively reasonable.  The averaging should give
no particle a negative weight, should give no directional preference, and should
give more weight to particles that are physically closer.  These guidelines are
easily expressed as mathematical properties of the filter.   Thus $g$ should be
nonnegative, decreasing, and radially symmetric.  Furthermore, as a weighted
average the integral of $g$ over the domain should be precisely one.

\subsection{Characteristic Wavelength Parameter}
Each such filter $g$ is equipped with a parameter $\alpha$, such that $\alpha$
is the characteristic wavelength of that filter.  This parameter is introduced
by scaling the filter as such:
\begin{equation}
g^\alpha=\frac{1}{\alpha}g(\frac{\xb}{\alpha}).
\end{equation}
This scaling is also realized in the Fourier domain by noting that
\begin{equation}
\widehat{g^\alpha}(\kb)= \widehat{g}(\alpha \kb).
\end{equation}
Thus as $\alpha$ becomes smaller, the wavelength where the filter exerts
influence also become smaller.

With this scaling, the filter remains normalized, nonnegative, decreasing, and
isotropic.  Furthermore as $\alpha \to 0$, $g^\alpha$ approaches the Dirac delta
function, and Equation \eref{CFBa} formally limits to inviscid Burgers equation.
 This parameter is crucial in resolving features in $\ubarb$.  When convoluting
$g^\alpha$ with $\ub$, features in $\ub$ that have length less than $\alpha$
will be smoothed out.  It is also seen in Section \ref{shockthickness} that
$\alpha$ controls shock thickness.

\subsection{Fourier Decay Properties}
In the following section, Theorem \ref{existencetheorem} states that system
\eref{IVPa}, \eref{IVPb}, and \eref{IVPc} has a continously differentiable solution, if $g(\textbf{x})$ $\in$
$W^{1,1}(\mathbb{R}^n)$ or alternatively
\begin{equation}
\label{filterL1}
||g||_{L^1}<\infty,
\end{equation}
and
\begin{equation}
\label{filterdxL1}
||\frac{\p}{\p x_j}g||_{L^1}<\infty.
\end{equation}
	
Condition (\ref{filterdxL1}) has implication in the Fourier domain.  The
Riemann-Lebesgue Lemma demands that if 
$\frac{\p}{\p x_j}g\in L^1$ then $\widehat{\frac{\p}{\p x_j}g}=ik_j\hat{g}\in
C_0.$  Thus not only must $\hat{g}\in C_0$,
but the condition,

\begin{equation}
\label{filterdecay}
\lim_{|\kb| \to \infty} |\kb|\hat{g}(\kb)=0,
\end{equation}
must be satisfied in order to meet (\ref{filterdxL1}).  Note that
\eref{filterdecay} is a necessary, not sufficient, condition to meet Theorem
\ref{existencetheorem}.

\begin{table}

\begin{tabular}{lc}
\vspace{3 mm}\\
\hline
Properties & Mathematical Expression\\
\hline
Normalized & $\int g$=1 \\
Nonnegative & $g(\xb) >0,\, \forall \xb $   \\
Decreasing & $|\xb_1|\ge |\xb_2| \Rightarrow g(\xb_1) \ge g(\xb_2)$  \\
Symmetric  & $|\xb_1|= |\xb_2| \Rightarrow g(\xb_1) = g(\xb_2)$    \\
Fourier Decay  & $\lim_{|\kb| \to \infty} |\kb|\hat{g}(\kb)=0$\\
\hline
\end{tabular}
\caption{\label{FilterProperties}This table succinctly lists the requirements of
the low pass filters employed in Equation \eref{CFBb}.}
\end{table}

\section{Existence and Uniqueness Theorem}\label{ExistenceTheorem}

  It has been previously proven \cite{BhatHS:06a} that the initial value problem
\eref{IVPa}, \eref{IVPb}, and \eref{IVPc} using the Helmholtz filter
(\ref{helmholtzfiltera}) has a continuously differentiable solution for
continuously differentiable initial conditions.  Taking inspiration from that
work, the following theorem generalizes the existence and uniqueness result into
multiple dimensions and a variety of filters.

\begin{theorem}
\label{existencetheorem}
Let $g(\mathbf{x})$ $\in$ $W^{1,1}(\mathbb{R}^n)$ and $\ub_0(\mathbf{x})$ $\in$
$C^1(\mathbb{R}^n)$, then there exists a unique global solution
$\ub(\mathbf{x},t)$ $\in$ $C^1(\mathbb{R}^n)$ to the initial value problem
\eref{IVPa}, \eref{IVPb}, and \eref{IVPc}.

\label{IVP}
\begin{eqnarray}
\label{IVPa}
\ub_t+\ubarb\cdot \triangledown \ub=0\\
\label{IVPb}
\ubarb=g\ast \ub\\
\label{IVPc}
\ub(\mathbf{x},0)=\ub_0(\mathbf{x})
\end{eqnarray}

\end{theorem}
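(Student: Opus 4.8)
The plan is to solve \eref{IVPa}--\eref{IVPc} by the method of characteristics combined with a contraction-mapping argument, exploiting the fact that the filter $g$ turns a merely bounded field into a $C^1$ advecting velocity. The organizing observation is that \eref{IVPa} is a \emph{linear} transport equation once $\ubarb$ is regarded as given: $\ub$ is constant along the characteristics $\dot\phib=\ubarb(\phib,t)$, so that $\ub(\xb,t)=\ub_0(\phib^{-1}(\xb,t))$ and in particular $\|\ub(\cdot,t)\|_{L^\infty}=\|\ub_0\|_{L^\infty}$ for all $t$. The entire nonlinearity and nonlocality are thereby funneled into the self-consistency condition \eref{IVPb}.

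First I would record the smoothing estimate coming from $g\in W^{1,1}$: for any $\ub\in L^\infty(\mathbb{R}^n)$ the convolution $\ubarb=g\ast\ub$ satisfies $\|\ubarb\|_{L^\infty}\le\|g\|_{L^1}\|\ub\|_{L^\infty}$ and $\|\triangledown\ubarb\|_{L^\infty}=\|(\triangledown g)\ast\ub\|_{L^\infty}\le\|\triangledown g\|_{L^1}\|\ub\|_{L^\infty}$. Since $\triangledown g\in L^1$ and $\ub\in L^\infty$, the field $(\triangledown g)\ast\ub$ is not merely bounded but uniformly continuous, by continuity of translation in $L^1$; hence $\ubarb\in C^1_b$. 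This is the crucial gain: however rough $\ub$ may be, the advecting velocity $\ubarb$ is globally Lipschitz with a constant controlled solely by $\|\triangledown g\|_{L^1}\|\ub_0\|_{L^\infty}$, which is exactly what forbids shock formation.

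Next I would set up the iteration. Given a $C^1_b$ field $\ubarb$, classical ODE existence--uniqueness yields a global flow $\phib(\xb,t)$ (the uniform Lipschitz bound precludes finite-time blow-up and forces only linear growth of $|\phib|$), which is a $C^1$ diffeomorphism, so $\mathcal{S}[\ubarb]:=\ub_0\circ\phib^{-1}$ is a $C^1$ solution of the transport step. Define $\mathcal{T}[\ubarb]:=g\ast\mathcal{S}[\ubarb]$ and work in the ball $B=\{\ubarb\in C([0,T];C^1_b):\|\ubarb\|_{L^\infty}\le M_0,\ \|\triangledown\ubarb\|_{L^\infty}\le L_0\}$ with $M_0=\|g\|_{L^1}\|\ub_0\|_{L^\infty}$ and $L_0=\|\triangledown g\|_{L^1}\|\ub_0\|_{L^\infty}$; note $B$ is complete in the weaker $\sup_t L^\infty_x$ metric. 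The smoothing estimate together with $L^\infty$-conservation shows $\mathcal{T}(B)\subset B$. To prove $\mathcal{T}$ is a contraction for small $T$, I would compare the flows of $\ubarb_1,\ubarb_2\in B$: a Gronwall estimate on $\dot\phib_i=\ubarb_i(\phib_i,t)$ gives $\|\phib_1-\phib_2\|_{L^\infty}\lesssim T e^{L_0 T}\|\ubarb_1-\ubarb_2\|$, the $C^1$ regularity of $\ub_0$ transfers this to $\|\mathcal{S}[\ubarb_1]-\mathcal{S}[\ubarb_2]\|_{L^\infty}$, and the bound $\|g\ast\cdot\|\le\|g\|_{L^1}$ closes the loop with a factor $O(T)$. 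Banach's theorem then gives a unique fixed point on $[0,T]$, i.e.\ local existence and uniqueness.

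Finally, for global existence I would exploit that $M_0$ and $L_0$ are independent of $T$ and, thanks to $\|\ub(\cdot,t)\|_{L^\infty}=\|\ub_0\|_{L^\infty}$, are preserved at every time; hence the local existence time is uniform and the solution may be restarted indefinitely to reach $[0,\infty)$. Persistence of $C^1$ regularity follows because $\ubarb$ stays in $C^1_b$ with the same bounds, so $\phib(\cdot,t)$ remains a $C^1$ diffeomorphism; differentiating \eref{IVPa} along characteristics gives $\tfrac{D}{Dt}\triangledown\ub=-(\triangledown\ubarb)\triangledown\ub$ and the Gronwall bound $\|\triangledown\ub(\cdot,t)\|_{L^\infty}\le\|\triangledown\ub_0\|_{L^\infty}e^{L_0 t}$, finite for every finite $t$. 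Uniqueness propagates from the local statement (or from a direct Gronwall estimate on the difference of two solutions). I expect the contraction estimate to be the main obstacle: the dependence of $\ub$, and hence of $\ubarb=g\ast\ub$, on the advecting field is nonlinear and enters only through the flow map, so the whole argument hinges on quantifying how the characteristics move under a perturbation of the velocity, and on the fact that the single derivative supplied by $g\in W^{1,1}$ is precisely enough to render $\ubarb$ a $C^1_b$ advecting field.
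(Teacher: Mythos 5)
Your proposal is correct in substance but takes a genuinely different---and in fact more complete---route than the paper. The paper also works in Lagrangian variables and rests on exactly your two key estimates: conservation $\|\ub(\cdot,t)\|_{L^\infty}=\|\ub_0\|_{L^\infty}$ along characteristics, and Young's inequality $\|\p_{x_i}\ubar_j\|_{L^\infty}\le\|\p_{x_i}g\|_{L^1}\|u_{0_j}\|_{L^\infty}$. But instead of constructing the flow, the paper takes the map $\phib$ as given, invokes Aris's formula $\p_t J=(\triangledown\cdot\ubarb)\,J$ for its Jacobian, and concludes that $J=\exp\left(\int_0^t\triangledown\cdot\ubarb\,dt\right)$ stays nonzero for all finite time, so that $\ub=\ub_0\circ\phib^{-1}$ is a $C^1$ solution. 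This is really an a priori verification: the coupled system defining $\phib$ (the advecting field $\ubarb$ is itself determined by the unknown $\ub$) is never shown to admit a solution in the first place. Your contraction-mapping construction supplies precisely that missing step, so your route buys genuine rigor at the cost of more machinery, while the paper's Jacobian computation buys brevity plus an explicit exponential formula that the authors reuse later (Section \ref{unfilteredvelocity}) to explain shock thinning.

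Three small repairs to your argument. First, the contraction factor involves $\|\triangledown\ub_0\|_{L^\infty}$ (needed to convert closeness of the inverse flows into closeness of $\mathcal{S}[\ubarb_1]$ and $\mathcal{S}[\ubarb_2]$), so you implicitly require $\ub_0$ to be globally Lipschitz and bounded, i.e.\ $C^1_b$ rather than bare $C^1$; the paper is equally cavalier on this point, since it too uses $\|\ub_0\|_{L^\infty}$. Second, because the Lipschitz norm of the transported data grows---by your own bound $\|\triangledown\ub(\cdot,t)\|_{L^\infty}\le\|\triangledown\ub_0\|_{L^\infty}e^{L_0t}$---the local existence time is \emph{not} literally uniform under restarts; global existence instead follows from a maximal-time argument: if the maximal time $t^*$ were finite, every norm entering the local theory would remain bounded on $[0,t^*)$, so the solution could be continued past $t^*$, a contradiction. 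Third, your ball $B$ is not closed in the sup-$L^\infty$ metric (a uniform limit of $C^1_b$ fields with Lipschitz constant $L_0$ need only be Lipschitz), but this is self-correcting: any fixed point equals $g\ast\mathcal{S}[\cdot]$ and is therefore automatically $C^1_b$ by your smoothing lemma. None of these affects the soundness of the overall strategy.
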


\begin{proof}  Begin by shifting perspective into the material view.  Associate a map
$\phib(\boldsymbol{\xi},t): \mathbb{R}^n \to \mathbb{R}^n$ as the map from
a particle's original position, ($\boldsymbol{\xi}$), to the particle's  position at time $t$, ($\xb$).
Naturally this dictates that $\phib(\boldsymbol{\xi},0)=\boldsymbol{\xi}.$ 
Associate this material map with the velocity $\ubarb$ by

\begin{equation}
\frac{\partial}{\partial
t}\phib(\boldsymbol{\xi},t)=\ubarb(\phib(\boldsymbol{\xi},t),t).
\end{equation}

It can then be seen that \eref{IVPa} simply becomes

\begin{equation}
\frac{\partial}{\partial t} \ub (\phib(\boldsymbol{\xi},t),t)=0,
\end{equation}
and thus, $\ub (\phib(\boldsymbol{\xi},t),t)=\ub
(\phib(\boldsymbol{\xi},0),0)=\ub_0(\boldsymbol{\xi})$, which implies

\begin{equation}
\label{linfinitybound}
\left|\left| \ub \right|\right|_{L^\infty}=\left|\left| \ub_0
\right|\right|_{L^\infty}.
\end{equation}

If it is assumed that $\phib(\boldsymbol{\xi},t)$ has a continuously
differentiable inverse $\phib^{-1}(\xb,t)$, then \eref{IVPa}, \eref{IVPb},
and \eref{IVPc} will have the continuously differentiable solution

\begin{equation}
\label{solution}
\ub (\mathbf{x},t)=\ub_0 (\phib^{-1}(\mathbf{x},t)).
\end{equation}

A sufficient condition for such an  $\phib^{-1}$ to uniquely exist is the
Jacobian of $\phib$ to be non-zero for all positions and time. Thus if
$J(\phib(\boldsymbol{\xi})) \neq 0, \forall \, \boldsymbol{\xi}, t$, then
\eref{IVPa}, \eref{IVPb}, and \eref{IVPc} is uniquely solved by \eref{solution}.

Using a result from Aris \cite{ArisR:62a}, the time derivative of the Jacobian
is found to be
\begin{equation}
\frac{\partial}{\partial t} J =\triangledown \cdot \ubarb \, J,
\end{equation}
which is a differentiable equation with solution 
\begin{equation}
\label{jacobiansolution}
J=J_0 \exp(\int_0^t \triangledown \cdot \ubarb \,dt) 
\end{equation}
Clearly $J_0=1$, since
$\phib(\boldsymbol\xi,0)=\boldsymbol\xi$.  Thus it is clear that $J$
remains non-zero if $\left|\int_0^t \triangledown \cdot \ubarb \,dt\right| <
\infty$.

Given that $g(\mathbf{x})$ $\in$ $W^{1,1}(\mathbb{R}^n)$, there exists $M \in
\mathbb{R}$, such that
\begin{equation}
\label{l1bound}
\left|\left| \frac{\partial}{\partial{x_i} } g \right|\right|_{L^1} \leq
M<\infty,\qquad \forall \, i.
\end{equation}
Knowing that $\frac{\partial}{\partial{x_i} } g \in L^1$ and $\ub\in L^\infty$,
we know that $\frac{\partial}{\partial{x_i} } \ubar_j$ exists and that
\begin{equation}
\frac{\partial}{\partial{x_i} } \ubar_j =\frac{\partial}{\partial{x_i} } g \ast
u_j.
\end{equation}
Using this and bounds \eref{linfinitybound} and \eref{l1bound}, Young's
inequality states
\begin{equation}
\left|\left| \frac{\partial}{\partial{x_i} } \ubar_j \right|\right|_{L^\infty}
\leq \left|\left| \frac{\partial}{\partial{x_i} } g \right|\right|_{L^1}
\left|\left| u_j \right|\right|_{L^\infty} \leq M \left|\left| u_{0_j}
\right|\right|_{L^\infty},
\end{equation} 
which leads directly to the bound
\begin{equation}
\left|\left| \int_0 ^t \triangledown \cdot \ubarb \,dt \right|\right|_{L^\infty}
\leq nM \max_j \left|\left| u_{0_j} \right|\right|_{L^\infty}t.
\end{equation}

Thus for finite time, the Jacobian of $\phib$ remains uniquely invertible, with a
continuously differentiable inverse and thus \eref{solution} is a unique
$C^1(\mathbb{R}^n)$ solution to \eref{IVPa}, \eref{IVPb}, and \eref{IVPc}.

\end{proof}

\begin{theorem}
\label{existencetheorem2}
Let $g(\mathbf{x})$ $\in$ $W^{1,1}(\mathbb{R}^n)$ and $\ub_0(\mathbf{x})$ $\in$
$L^\infty(\mathbb{R}^n)$, then there exists a unique global solution
$\ub(\mathbf{x},t)$ $\in$ $L^\infty(\mathbb{R}^n)$ to the initial value problem
\eref{IVPa}, \eref{IVPb}, and \eref{IVPc}.
\end{theorem}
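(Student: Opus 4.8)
The plan is to bootstrap from Theorem~\ref{existencetheorem} by approximation, since the only place smoothness of $\ub_0$ entered the previous argument was in differentiating the characteristic formula \eref{solution}, whereas the structural estimates on the flow never used it. The decisive observation, already implicit in that proof, is that $\ubarb=g\ast\ub$ is globally Lipschitz in $\xb$ as soon as $\ub\in L^\infty$: Young's inequality gives $\|\partial_{x_i}\ubar_j\|_{L^\infty}\le\|\partial_{x_i}g\|_{L^1}\|\ub\|_{L^\infty}\le M\|\ub_0\|_{L^\infty}$, a bound depending on $\ub$ only through its $L^\infty$ norm. Since \eref{IVPa} transports values and hence preserves $\|\ub\|_{L^\infty}=\|\ub_0\|_{L^\infty}=:R$, the spatial Lipschitz constant of $\ubarb$ stays controlled by $nMR$ for all time, which is exactly what makes the flow $\phib$ globally well defined in the rough setting.

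First I would mollify the data, setting $\ub_0^\delta=\rho_\delta\ast\ub_0\in C^1(\mathbb{R}^n)$ with $\|\ub_0^\delta\|_{L^\infty}\le R$ and $\ub_0^\delta\to\ub_0$ in $L^p_{\mathrm{loc}}$ and a.e. Theorem~\ref{existencetheorem} then supplies unique $C^1$ solutions $\ub^\delta(\xb,t)=\ub_0^\delta\big((\phib^\delta)^{-1}(\xb,t)\big)$ whose flows obey $\partial_t\phib^\delta=\ubarb^\delta(\phib^\delta,t)$ with $\ubarb^\delta=g\ast\ub^\delta$. The uniform bounds $\|\ub^\delta\|_{L^\infty}\le R$ and $\|\nabla\ubarb^\delta\|_{L^\infty}\le nMR$, together with $|\partial_t\phib^\delta|=|\ubarb^\delta|\le\|g\|_{L^1}R$ and the Jacobian bounds $e^{-nMRt}\le J^\delta\le e^{nMRt}$ coming from \eref{jacobiansolution}, make $\{\phib^\delta\}$ and $\{(\phib^\delta)^{-1}\}$ uniformly bi-Lipschitz on $\mathbb{R}^n\times[0,T]$. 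By Arzel\`a--Ascoli I would extract a subsequence with $\phib^\delta\to\phib$ and $(\phib^\delta)^{-1}\to\phib^{-1}$ locally uniformly, the limit inheriting the same bi-Lipschitz and Jacobian bounds, and define the candidate $\ub(\xb,t)=\ub_0(\phib^{-1}(\xb,t))$, which automatically lies in $L^\infty$ with norm $R$.

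The main obstacle is that $\ub_0$ is merely bounded, so the composition $\ub_0\circ(\phib^\delta)^{-1}$ cannot be controlled in $L^\infty$ under perturbation of the flow; convergence of the solutions must instead be measured in $L^p_{\mathrm{loc}}$. The key auxiliary fact I would isolate and prove is a composition-continuity lemma: if $f_\delta\to f$ in $L^p_{\mathrm{loc}}$ and $\psi_\delta\to\psi$ locally uniformly through maps that are uniformly bi-Lipschitz with Jacobians bounded below by a positive constant, then $f_\delta\circ\psi_\delta\to f\circ\psi$ in $L^p_{\mathrm{loc}}$ (splitting off a continuous approximation of $f$ and using the change of variables to absorb the Jacobians). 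Applied with $f_\delta=\ub_0^\delta$ and $\psi_\delta=(\phib^\delta)^{-1}$, this gives $\ub^\delta\to\ub$ in $L^p_{\mathrm{loc}}$. Convolution with $g\in L^1$ then upgrades this to $\ubarb^\delta\to\ubarb=g\ast\ub$ locally uniformly (split the convolution integral into a bounded piece, handled by $L^1_{\mathrm{loc}}$ convergence, and a tail, handled by $\|g\|_{L^1}$ smallness against the uniform bound $R$), so that passing to the limit in $\partial_t\phib^\delta=\ubarb^\delta(\phib^\delta,t)$ yields $\partial_t\phib=\ubarb(\phib,t)$ and confirms that $\phib$ is genuinely the flow of the filtered velocity built from the limit $\ub$. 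This self-consistency is the heart of the argument.

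Finally I would verify that $\ub$ solves \eref{IVPa}--\eref{IVPc} and is unique. Existence in the distributional sense follows by passing to the limit in the weak form of $\partial_t\ub^\delta+\ubarb^\delta\cdot\nabla\ub^\delta=0$, using $\ub^\delta\to\ub$ in $L^p_{\mathrm{loc}}$ and $\ubarb^\delta\to\ubarb$ locally uniformly; equivalently, $\ub$ is constant along the characteristics of the Lipschitz field $\ubarb$. For uniqueness, given two $L^\infty$ solutions I would note that both generate globally Lipschitz filtered fields with $\|\ubarb^{(1)}-\ubarb^{(2)}\|_{L^p}\le\|g\|_{L^1}\|\ub^{(1)}-\ub^{(2)}\|_{L^p}$, run a Gronwall estimate on the corresponding flows measured in $L^p_{\mathrm{loc}}$, and close the loop to force $\ub^{(1)}=\ub^{(2)}$: the DiPerna--Lions uniqueness for transport by a fixed Lipschitz field handles the linear step while the $L^p$ contraction handles the nonlinear coupling on a short interval, after which the preserved bound $R$ propagates uniqueness to all time.
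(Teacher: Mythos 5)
Your existence argument takes a genuinely different route from the paper's. The paper disposes of this theorem in one line: it observes that nothing in the proof of Theorem~\ref{existencetheorem} ever used differentiability of $\ub_0$ --- the $L^\infty$ conservation \eref{linfinitybound}, the Young's-inequality bound on the gradient of $\ubarb$, and the Jacobian estimate depend only on $\left|\left|\ub_0\right|\right|_{L^\infty}$ and the $W^{1,1}$ norm of $g$ --- so the identical flow construction goes through verbatim and the formula $\ub=\ub_0\circ\phib^{-1}$ is still the solution, merely no longer continuous. Your mollification--compactness scheme (uniform $L^\infty$ and bi-Lipschitz bounds on the approximate flows, Arzel\`a--Ascoli, the composition-continuity lemma in $L^p_{\mathrm{loc}}$, the upgrade of $\ubarb^\delta\to\ubarb$ to locally uniform convergence, passage to the limit in the flow ODE) is far more laborious, but the existence half is sound, and in fact supplies rigor the paper leaves implicit, since it identifies the rough composition both as a distributional solution of \eref{IVPa} and as the transport of $\ub_0$ along the flow of the limiting filtered velocity.

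The gap is in uniqueness. Young's inequality gives $\left|\left|\ubarb^{(1)}-\ubarb^{(2)}\right|\right|_{L^p}\le\left|\left|g\right|\right|_{L^1}\left|\left|\ub^{(1)}-\ub^{(2)}\right|\right|_{L^p}$, which bounds the velocity difference by the solution difference; but to close a Gronwall loop on the flows you need the reverse implication: a bound on $\left|\left|\ub_0\circ(\phib^{(1)})^{-1}-\ub_0\circ(\phib^{(2)})^{-1}\right|\right|_{L^p_{\mathrm{loc}}}$ that is linear (or at least Osgood) in the flow distance. This is exactly the obstacle you yourself flagged in the existence half: composition of a merely bounded $\ub_0$ with nearby bi-Lipschitz maps is stable only with a rate-free modulus of continuity (continuity of translation in $L^p$), never with a Lipschitz or H\"older rate, and for a general $L^\infty$ datum that modulus can be arbitrarily bad. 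What the coupling actually yields is therefore an inequality of the form $D(t)\le\int_0^t\bigl[C\,D(s)+\omega\bigl(D(s)\bigr)\bigr]\,ds$ with $\omega$ an arbitrary modulus, which does not force $D\equiv0$; a short-time contraction fails for the same reason. DiPerna--Lions (or simply Cauchy--Lipschitz) uniqueness for transport by a \emph{fixed} Lipschitz field only tells you that each solution is carried by its own flow; it does nothing to decouple the nonlinear dependence of the field on the solution. Note that the paper sidesteps this point by fiat --- there, ``uniqueness'' is uniqueness of the representation $\ub_0\circ\phib^{-1}$ once the flow is given, and uniqueness of the self-consistent flow is not addressed --- but your proposal, which aims at genuine uniqueness among all $L^\infty$ solutions, must confront it, and as written it does not.
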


\begin{proof}
The proof is the same as for Theorem \ref{existencetheorem}. 
$\phib(\boldsymbol{\xi},t)$ still has a unique continuously differentiable
inverse, and the solution remains $\ub (\mathbf{x},t)=\ub_0
(\phib^{-1}(\mathbf{x},t)),$  but now lacks continuity due to the initial
conditions.
\end{proof}

\section{Invariants of Motion}\label{invariantsofmotion}
Inviscid and viscous Burgers equation have invariants of motion that are a
result of their inherent geometric structure.  Often these invariants of motion
can only be realized with additional constraints which impart some physical meaning.   These constraints are
specifically: restricition to a single dimension, assuming the velocity is the
gradient of a potential function, $\ub=\triangledown \phi$, and by the addition
of a continuity equation. Both restricting to a single dimension and assuming a potential function establishes a curl free velocity.  Introducing a continuity equation similar to that in the Euler equations introduces density. Table \ref{invariants} shows the various invariants of motions for the equations under different constraints.

By restricting inviscid Burgers equation to one dimension there are a countably
infinite number of conserved quantities \cite{DavoudiJ:01a}.  When viscous
Burgers equation is restricted to one dimension, only the wave mass is
conserved.  Similarly, CFB conserves wave mass when restricted to one dimension
\cite{Holm:03a}.  This result is dependent upon the filter $g$ being an even
function, which was a criteria set down in section \ref{filters}. In the special
case of the the Helmholtz filter, this quantity is also a Hamiltonian structure
\cite{Holm:03a, BhatHS:06a}. 

One dimensional CFB has an additional conserved quantity, total variation, that is not found in either
inviscid or viscous Burgers equation.  CFB can be considered a convection
equation, with convective velocity $\ubar.$  Since the solution is a remapping
of the initial condition, the total variation remains constant over time.  For
continuous initial conditions and thus continuous solutions, this can be
verified directly by noting that $T.V.(f)=\int|f_x| dx$  for continuous
functions. By taking the derivative of \eref{CFB1Da} and integrating its
absolute value, one obtains
\begin{equation}
\frac{\p}{\p t} \int |u_x|\, dx + \int sgn(u_x) (\ubar u_x)_x\,dx=0. 
\end{equation}
Break the second term into intervals where $sgn(u_x)$ remains constant. $u_x$
and $\ubar$ are continuous due to Theorem \ref{existencetheorem}, so where
$sgn(u_x)$ changes sign, $u_x=0$.  Thus the second term is zero and $\int
|u_x|\, dx$ remains constant over time. This was established by Bhat and Fetecau for the Helmholtz filters \cite{BhatHS:06a}, but remains true for any filter satisfying Theorem \ref{existencetheorem}.

When assuming a potential function for the velocity, $\ub =\triangledown \phi$, 
in inviscid and viscous Burgers equation wave mass is again conserved.  Notice
that restricition to one dimension trivially implies a potential function for
the velocity. Similarly CFB conserves wave mass with the assumption of a
potential function.  The velocities $\ub$ and $\ubarb$ have different potential
functions related through the filter $g$.  Again the conservation of wave mass
is dependent on $g$ being radially symmetric.

The addition of a continuity equation to inviscid Burgers equation results in
multiple invariants of motion \cite{DavoudiJ:01a}.  It can be easily shown that
this addition conserves mass, momentum, and kinetic energy.  CFB also conserves mass, momentum, and kinetic energy when a
continuity equation involving the filtered velocity is added.

CFB has another quantity that remains constant over time.  As can be seen in the
existence and uniqueness theorem in section \ref{ExistenceTheorem}, the infinity
norm of the unfiltered velocity remains constant.  Indeed both the maximum and
minimum of the initial condition is preserved over time.  This was shown for the
one dimensional case by Holm and Staley \cite{Holm:03a}.

It should be noted that almost all of the invariants of motion are found using the unfiltered velocity.  The filtered velocity, which is to be considered the physical quantity, has only wave mass preserved.  In order for the filtered velocity to accurately model the macroscopic (low wavenumber) behavior, energy must be leaving the system in the high wavenumber spectrum.  Indeed, in numerical simulation in sections \ref{spectralenergy} and \ref{energynorms} show that energy is leaving the filtered velocity at the high wavenumbers.

Besides wave mass, the filtered velocity does not appear in the invariants of motion.  It was mentioned in the introduction that the filtered velocity is meant as the physically relevant quantity and is to capture the low wavenumber behavior.  Thus it should not capture all of the energy behavior in the higher wave modes.  The unfiltered velocity is intended to capture such behavior.  Therefore the invariants of motion for the unfiltered velocity would be more properly compared to those of inviscid Burgers equations, just as the invariants of motion for the filtered velocity should be compared to those of viscous Burgers equation. It should be noticed that inviscid Burgers equation can be thought of as a conservation law for the wave mass.  Thus this invariant of motion is of primary importance and is noticeably preserved in viscous Burgers equations and in both velocities in convectively filtered Burgers equation.

\begin{table}\label{invariants}
\begin{tabular}{lll}

\centre{3}{Viscous Burgers}
\vspace{2mm}\\

Additional Constraints $\,\,\,$ & System Equations & Invariants of Motion\\
\hline
One Dimensional  & $\,\,\,u_t+u u_x=\nu u_{xx}$  & $\int u$ \\
\hline
Potential Function & 
			$\begin{array}{l}
			\ub_t+\ub \cdot \triangledown \ub = \nu \bigtriangleup
\ub, \\ 
			\ub=\bigtriangledown \phi
			\end{array}$
										
& $\int u_i$\\

\hline
\vspace{5mm}\\ 
\centre{3}{Inviscid Burgers}
\vspace{2mm}\\

Additional Constraints $\,\,\,$& System Equations & Invariants of Motion\\
\hline
One Dimensional  & $\,\,\,u_t+u u_x=0$  & $\int u^n, \forall n \in \mathbb{Z}$
\\
\hline
Potential Function & 
			$\begin{array}{l}
			\ub_t+\ub \cdot \triangledown \ub = 0, \\ 
			\ub=\bigtriangledown \phi
			\end{array}$
									& $\int
u_i$\\
\hline
Continuity Equation & 
			$\begin{array}{l}
			\ub_t+\ub \cdot \triangledown \ub = 0, \\ 
			\rho_t+\triangledown (\rho \ub)=0
			\end{array}$
									& $\int
\rho, \,\, \int \rho u_i, \,\, \int \rho \ub \cdot \ub$\\
\hline
\vspace{5mm}\\ 
\centre{3}{Convectively Filtered Burgers}
\vspace{2mm}\\

Additional Constraints $\,\,\,$& System Equations & Invariants of Motion\\
\hline
One Dimensional  & $\,\,\,u_t+\ubar u_x=0$  & $\int u,\,\,\int
\ubar,\,\,TV(u),\,\,||u||_\infty$ \\
\hline
Potential Function & 
			$\begin{array}{l}
			\ub_t+\ubarb \cdot \triangledown \ub = 0, \\ 
			
			\ub=\bigtriangledown \phi
			\end{array}$
									& $\int
u_i, \,\, \int \ubar_i,\,\,||\ub||_\infty$ \\
\hline
Continuity Equation & 
			$\begin{array}{l}
			\ub_t+\ubarb \cdot \triangledown \ub = 0, \\ 
			\rho_t+\triangledown (\rho \ubarb)=0
			\end{array}$
									& $\int
\rho, \,\, \int \rho u_i, \,\, \int \rho \ub \cdot \ub,\,\,||\ub||_\infty$\\
\hline

\end{tabular}
\caption{This table lists the additional constraints, the
modified equations, and the corresponding invariants of motion for viscous,
inviscid, and convectively filtered Burgers equation.}
\end{table}

\section{Traveling Wave Solutions}\label{travelingwavesolution}
In this section, one dimensional CFB, \eref{CFB1Da} and \eref{CFB1Db}, is shown
to have a traveling wave solution. For the Helmholtz filter, these have already been found by Bhat and Fetecau \cite{BhatHS:06a} and Holm and Staley \cite{Holm:03a}.  This section generalizes the results to general filters.  This traveling wave solution is a moving
shock that handles an amplitude drop or increase.  It satisfies the limiting
boundary conditions
\begin{equation}
\lim_{x\to \infty} u(x,t)=u_r,\qquad \lim_{x\to -\infty} u(x,t)=u_l.
\end{equation}

Inviscid Burgers is known to have weak solutions
\begin{equation}\label{jumpsolution}
u(x,t)=\left\{\begin{array}{ll}
u_l & x < ct \\ 
u_r & x\geq ct
\end{array}\right.
\end{equation}
with $c=\frac{1}{2}(u_r+u_l).$  The speed at which discontinuities travel is
dictated by the Rankine-Hugoniot jump condition \cite{LaxPD:73a}. 

It is not difficult to show that \eref{jumpsolution} is a weak solution to one
dimensional CFB as well.  With Equation \eref{jumpsolution} as the unfiltered
velocity solution, the filtered velocity is
\begin{equation}\label{filteredjumpsolution}
\ubar(x,t)=(u_r-u_l)\int^{x-ct}_{-\infty}g^\alpha(s)\,ds +u_l,
\end{equation}
with $g^\alpha$ as the filter and $c=\frac{1}{2}(u_r+u_l)$.  As filter $g$ is an
even function, the value of the filtered velocity in the middle of the shock is
the speed of the shock, $\ubar(ct,t)=c.$  This is crucial in verifying that
\eref{jumpsolution} is a weak solution.

There are some notable properties of the traveling wave solution.  It travels at
the same speed as traveling wave solutions to inviscid Burgers.  As it is
desirable for CFB to capture wave speed accurately, this is promising. 
Additionally, as $\alpha \to 0$, the filter $g$ approaches the Dirac delta
function, thus \eref{filteredjumpsolution} converges to \eref{jumpsolution}, thus the traveling solutions to CFB converges to weak solutions of inviscid Burgers equation.

It is also noted that when $u_l< u_r$ that the traveling wave solution will converge to a discontinuity, where $u(ct^-,t)<u(ct^+,t)$.  The Lax entropy condition states is that at points of discontinuity $u(x^-,t) > u(x^+,t)$.  Thus the traveling wave solutions to CFB will converge to a weak solution of inviscid Burgers equation which violates the entropy condition.  This non-entropic behavior can be avoided by only allowing continuous initial conditions.  This is shown in \cite{NorgardG:08a}.

\section{Numerics}\label{numerics}
For numerical simulations a pseudospectral method with a third order total variation diminishing (TVD) Runge-Kutta scheme was utilized.  Spatial derivatives and filtering were performed in the Fourier space.  The equations were advanced in time using the optimal third order TVD Runge-Kutta scheme presented by Gottlieb and Shu \cite{GottliebS:98a}.  The CFL coefficient was chosen at $c=0.3$, with runs typically conducted at resolutions of $2^{12}=4096$ for one dimension and $256\times256$ for two dimensions. Aliasing errors were handled using the same technique as in Holm and Staley \cite{Holm:03a}.

It should be noted that this numerical method captures the behavior in both the unfiltered and filtered velocity.  The filtered velocity is considered the physically relevant quantity and requires less resolution to capture, thus a more efficient numerical method would resolve only the filtered velocity using Equation \eref{smoothonly}.  However, since the priority currently lies in exploring the properties of CFB rather than in making efficient simulations, the scheme was designed to capture both velocities.

Our groups has extensive experience in using such techniques with LAE-$\alpha$ and LANS-$\alpha$ simulations where the accuracy and stability of the approach has been established \cite{Mohseni:03a}.  Of particular interest, we have shown that the $H^1$ norm of the energy in LAE-$\alpha$ computations are numerically conserved in longtime calculation \cite{Mohseni:03a}.

\section{Shock Behavior}\label{shockbehaviour}

\subsection{Unfiltered Velocity}\label{unfilteredvelocity}
In section \ref{ExistenceTheorem}, it was proven that the unfiltered velocity has a
continuous derivative for all time.  This was shown by proving that the Jacobian of the material map remains nonzero.  In Equation \ref{jacobiansolution} the Jacobian is shown to have an exponential structure. As it happens the Jacobian remains nonzero, but approaches zero rapidly.  From a method of characteristics point of view, the characteristics are growing continuously closer, but never intersect.  Thus the unfiltered velocity will form shocks of continually smaller thickness.  In numerical simulations, this process will inevitably drop below the finite resolution, thus from a numerical perspective, the unfiltered velocity becomes discontinuous.  The filtered velocity, however, appears continuous in the numerical simulations.  The shocks in the filtered velocity will not drop below the resolution of the numerical run provided a large enough $\alpha$ has been chosen. For the average run, $\alpha$ is chosen to be at least a magnitude greater than $\Delta x$.  This puts approximately 15 points across the width of the shock, which seems a reasonable number to resolve its features.

\subsection{Shock Profile}\label{shockprofile}

In 1D viscous Burgers equation, a wave will propagate to the right, with the
higher velocities overcoming the lower velocities, creating steep gradients. 
This steepening continues until a balance with dissipation is reached. 
Figure \ref{profileviscous} shows the evolution of viscous Burgers equation with
a Gaussian distribution as an initial condition.

\begin{figure}[!ht]
\begin{center}
\includegraphics[width=0.9\linewidth]{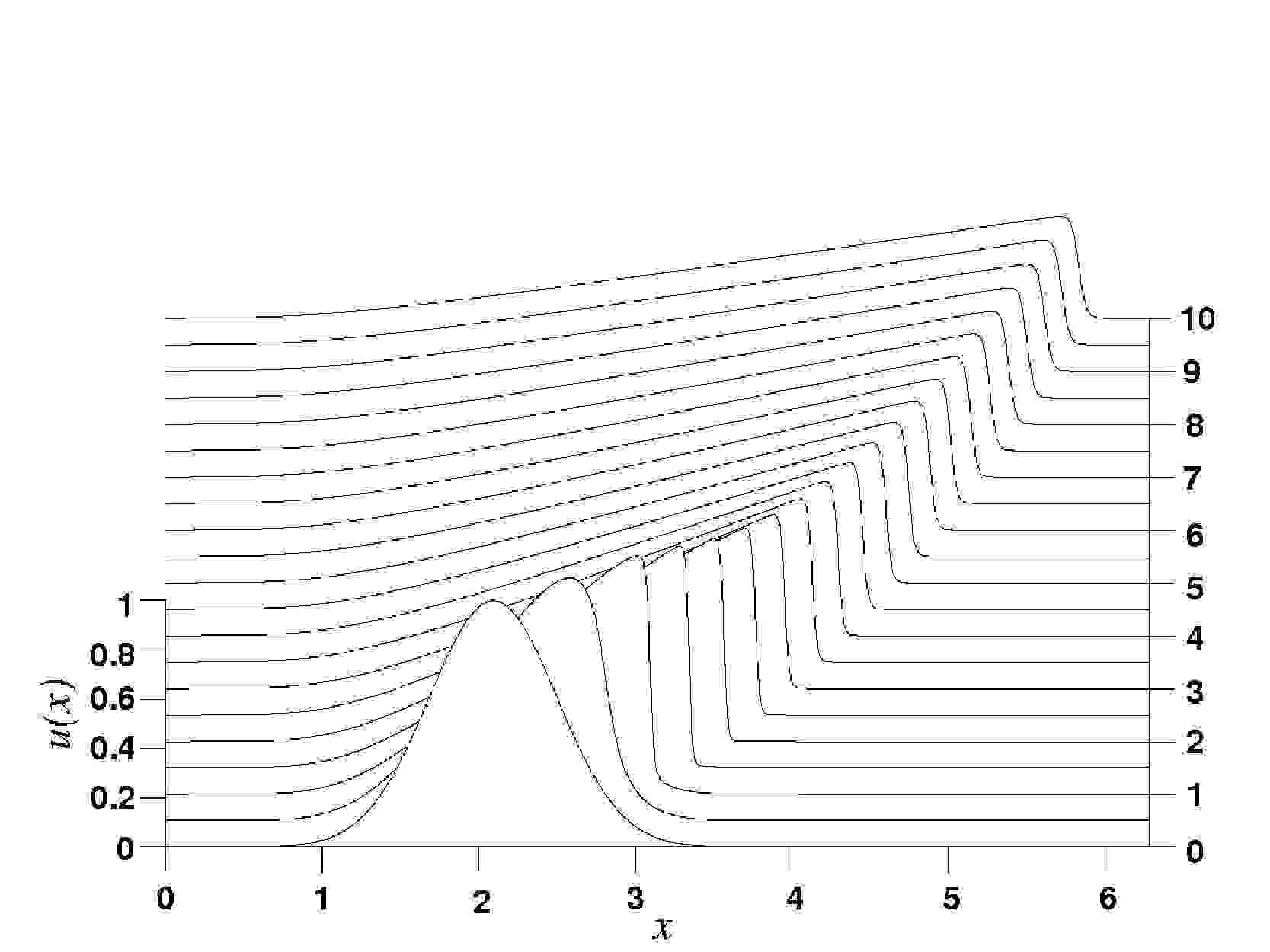}
\caption{Evolution of a Gaussian distribution under viscous Burgers equation for
$t=0, \dots ,10$.  The viscosity is $\nu=0.005$.}
\label{profileviscous}
\end{center}
\end{figure}

 In figure \ref{filtershock}, the evolution of CFB can be seen for two filters.
The behavior is qualitatively the same as viscous Burgers equation in reference
to wave propagation and shock formation.  Every filter that has been numerically
simulated has shown similar behavior.

\begin{figure}[!ht]
\begin{center}
\begin{minipage}{0.48\linewidth} \begin{center}
  \includegraphics[width=.9\linewidth]{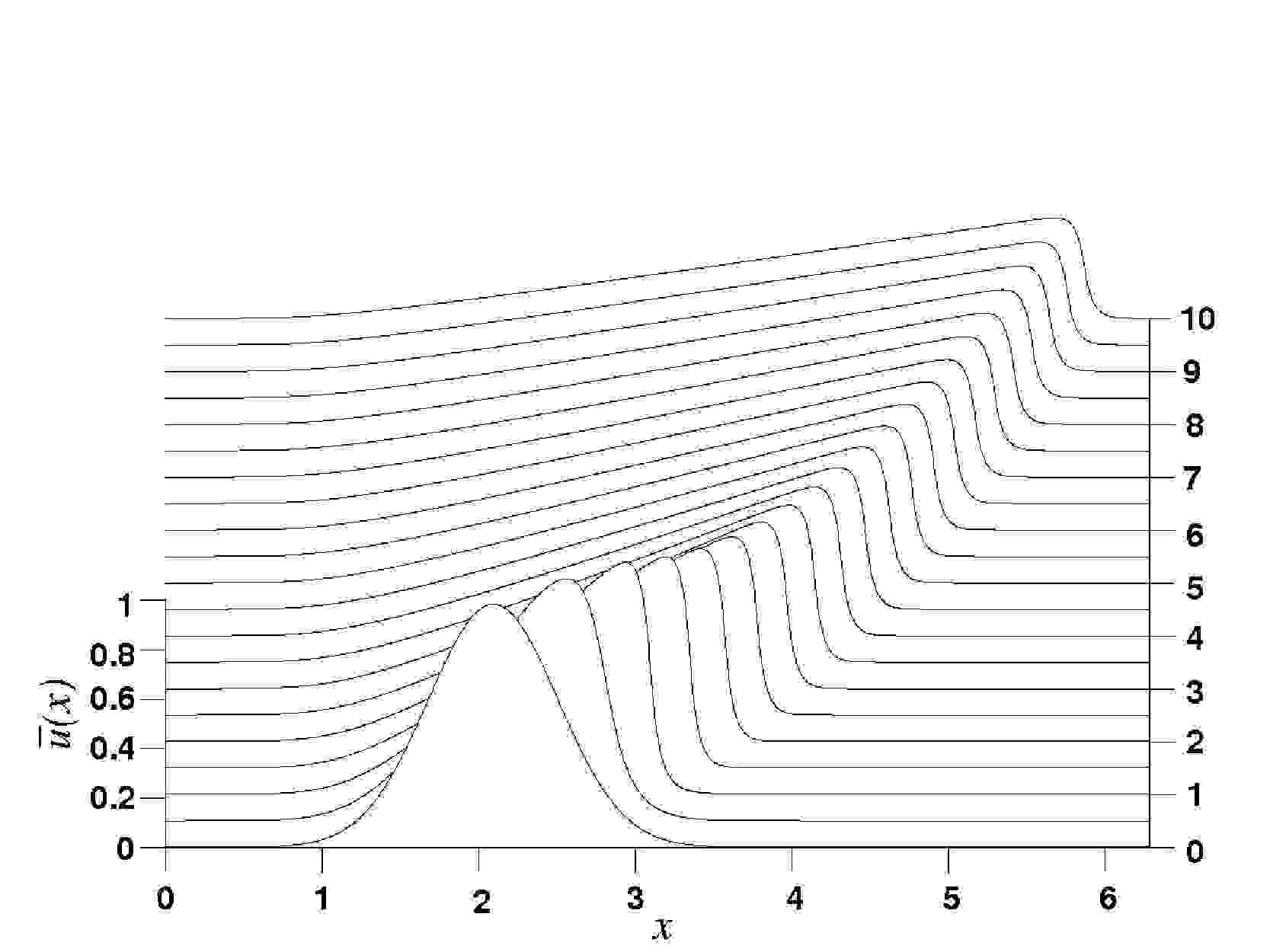}
\end{center} \end{minipage}
\begin{minipage}{0.48\linewidth} \begin{center}
  \includegraphics[width=.9\linewidth]{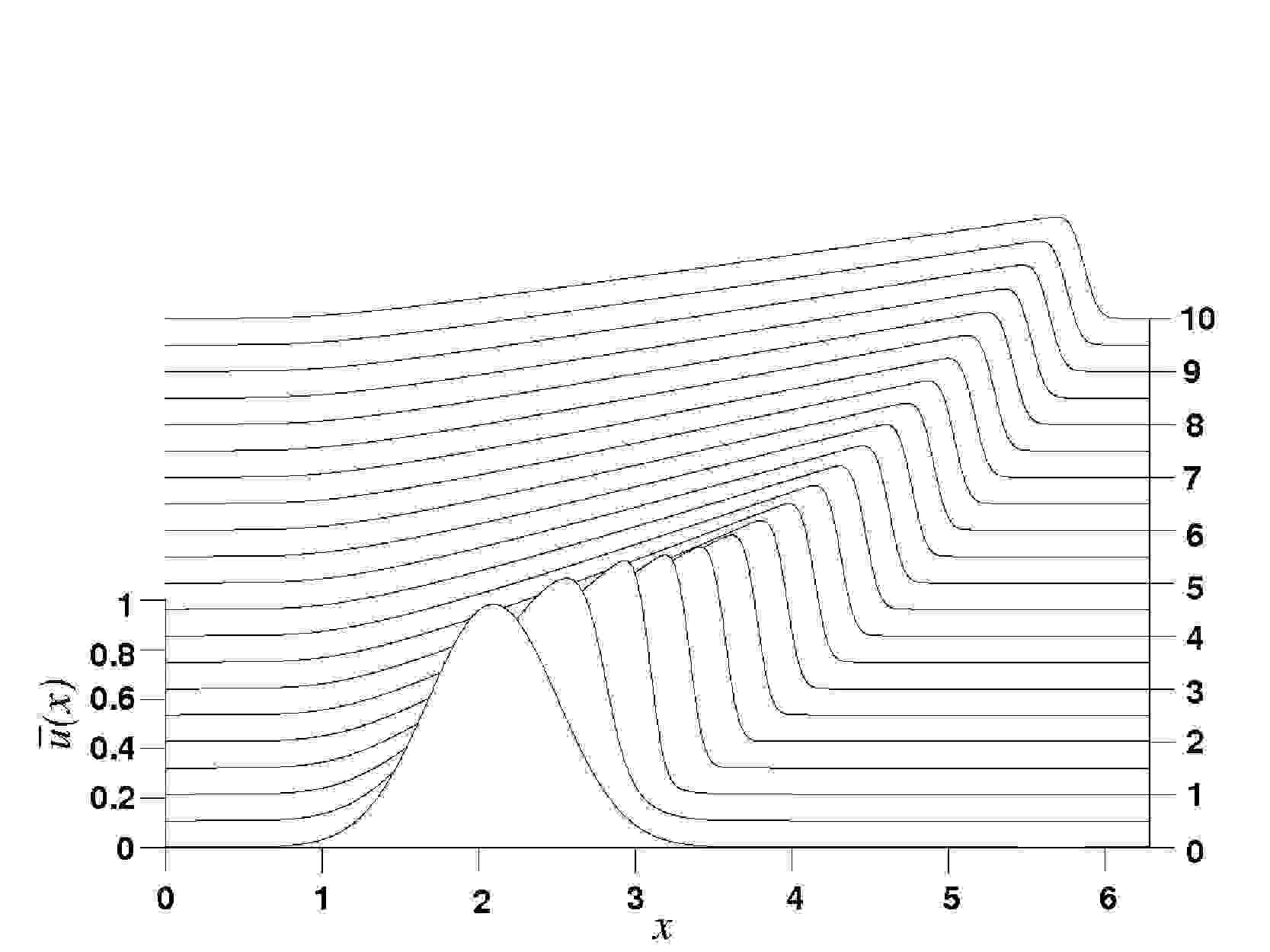}
\end{center} \end{minipage}\\
\begin{minipage}{0.48\linewidth}\begin{center} (a) \end{center} \end{minipage}
\begin{minipage}{0.48\linewidth}\begin{center} (b) \end{center} \end{minipage}
\caption{Evolution of CFB using different filters from $t=0,\dots,10$.  Only the
filtered velocity is shown.  For all filters $\alpha=0.05.$  (a) Helmholtz
filter $g(x)=\frac{1}{2}exp(-|x|)$. (b) Gaussian filter,
$g(x)=\pi^{-1/2}exp(-x^2)$. }
\label{filtershock}
\end{center}
\end{figure}

 Numerical simulations were also conducted in two dimensions.  It is more
difficult showing shock formation in 2D, but figures \ref{2Dhelmshock} and
\ref{2DGaussshock} show the evolution of a pulse under the Helmholtz and
Gaussian filter. In both runs, the pulse moves up and to the right, becoming
steeper, but never forming a discontinuity.  Once the shock has formed, the
pulse begins to decrease in amplitude.

\begin{figure}[!ht]
\begin{center}
\begin{minipage}{0.29\linewidth} \begin{center}
  \includegraphics[width=\linewidth]{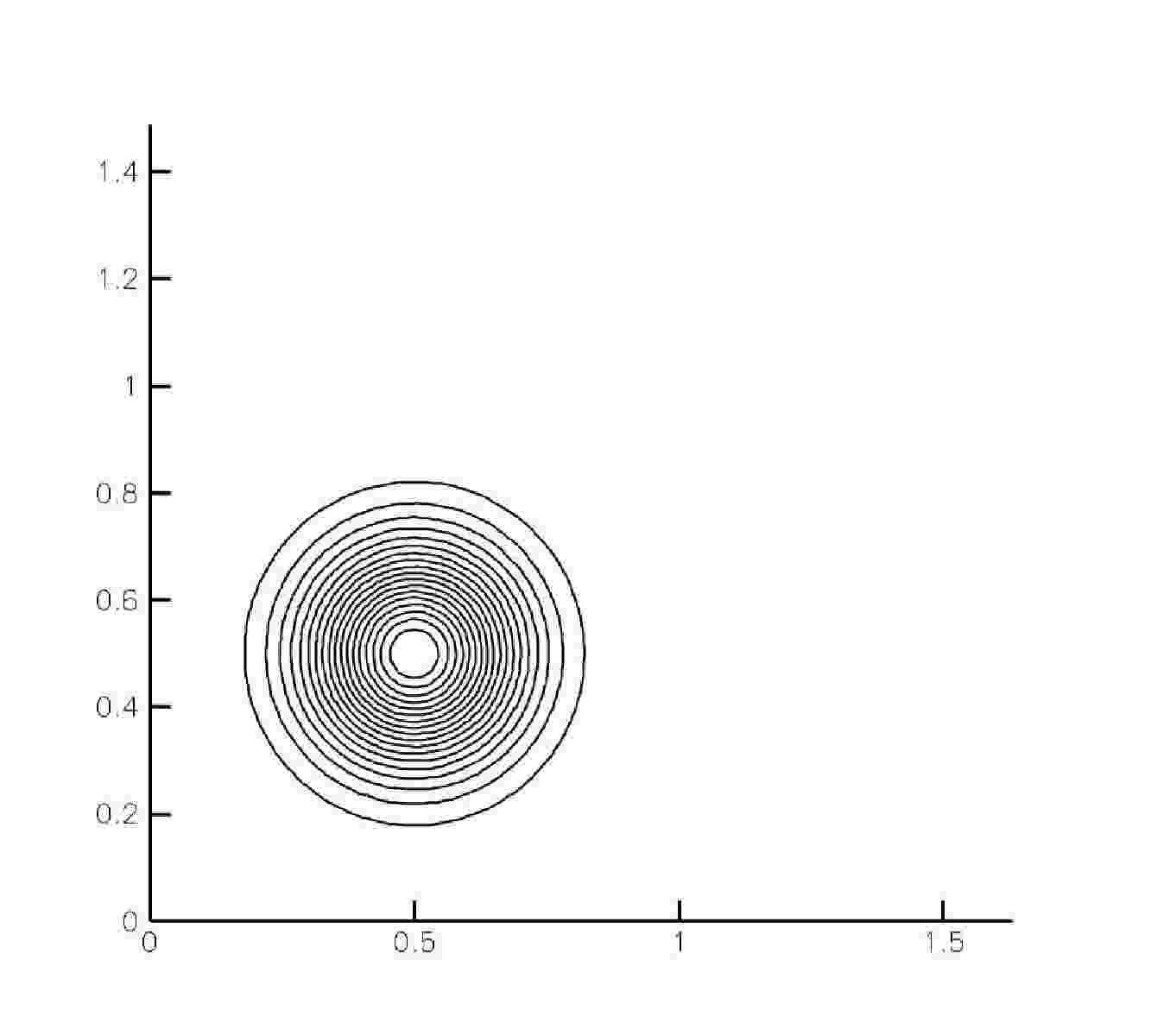}
\end{center} \end{minipage}
\begin{minipage}{0.29\linewidth} \begin{center}
  \includegraphics[width=\linewidth]{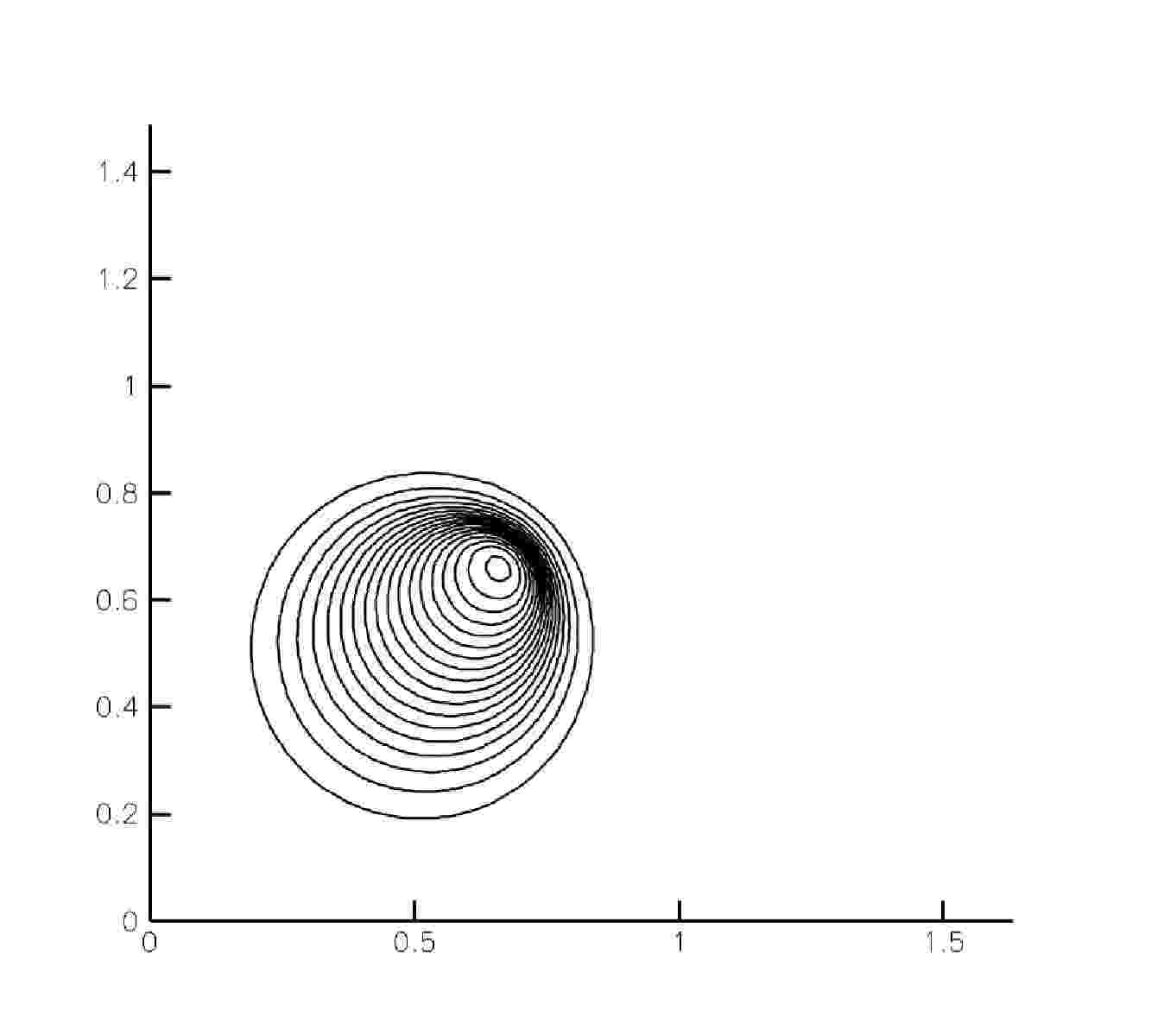}
\end{center} \end{minipage}
\begin{minipage}{0.29\linewidth} \begin{center}
  \includegraphics[width=\linewidth]{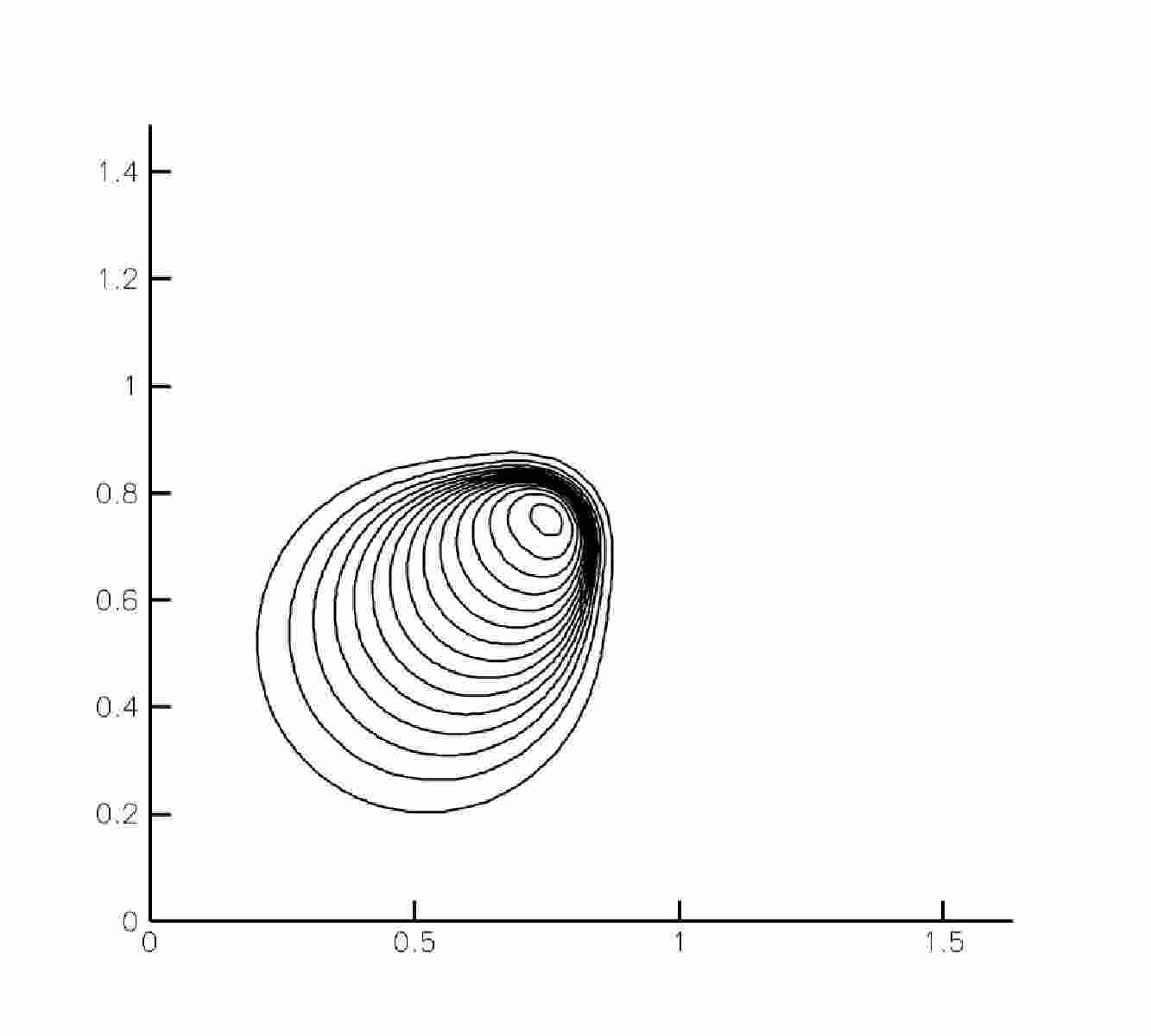}
\end{center} \end{minipage}\\
\begin{minipage}{0.29\linewidth}\begin{center} (a) \end{center}\end{minipage} 
\begin{minipage}{0.29\linewidth}\begin{center} (b) \end{center}\end{minipage} 
\begin{minipage}{0.29\linewidth}\begin{center} (c) \end{center}\end{minipage}\\
\begin{minipage}{0.29\linewidth} \begin{center}
  \includegraphics[width=\linewidth]{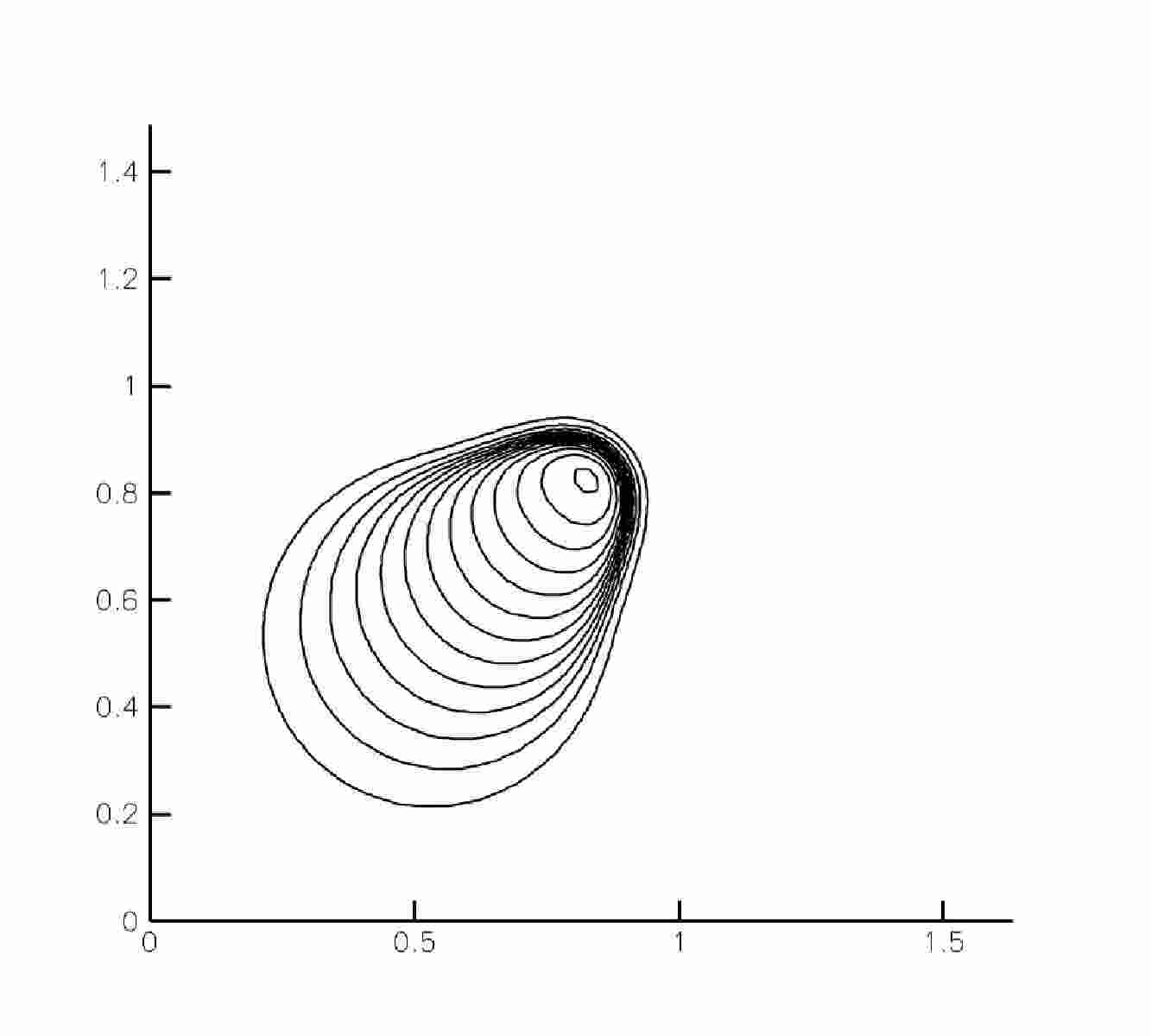}
\end{center} \end{minipage}
\begin{minipage}{0.29\linewidth} \begin{center}
  \includegraphics[width=\linewidth]{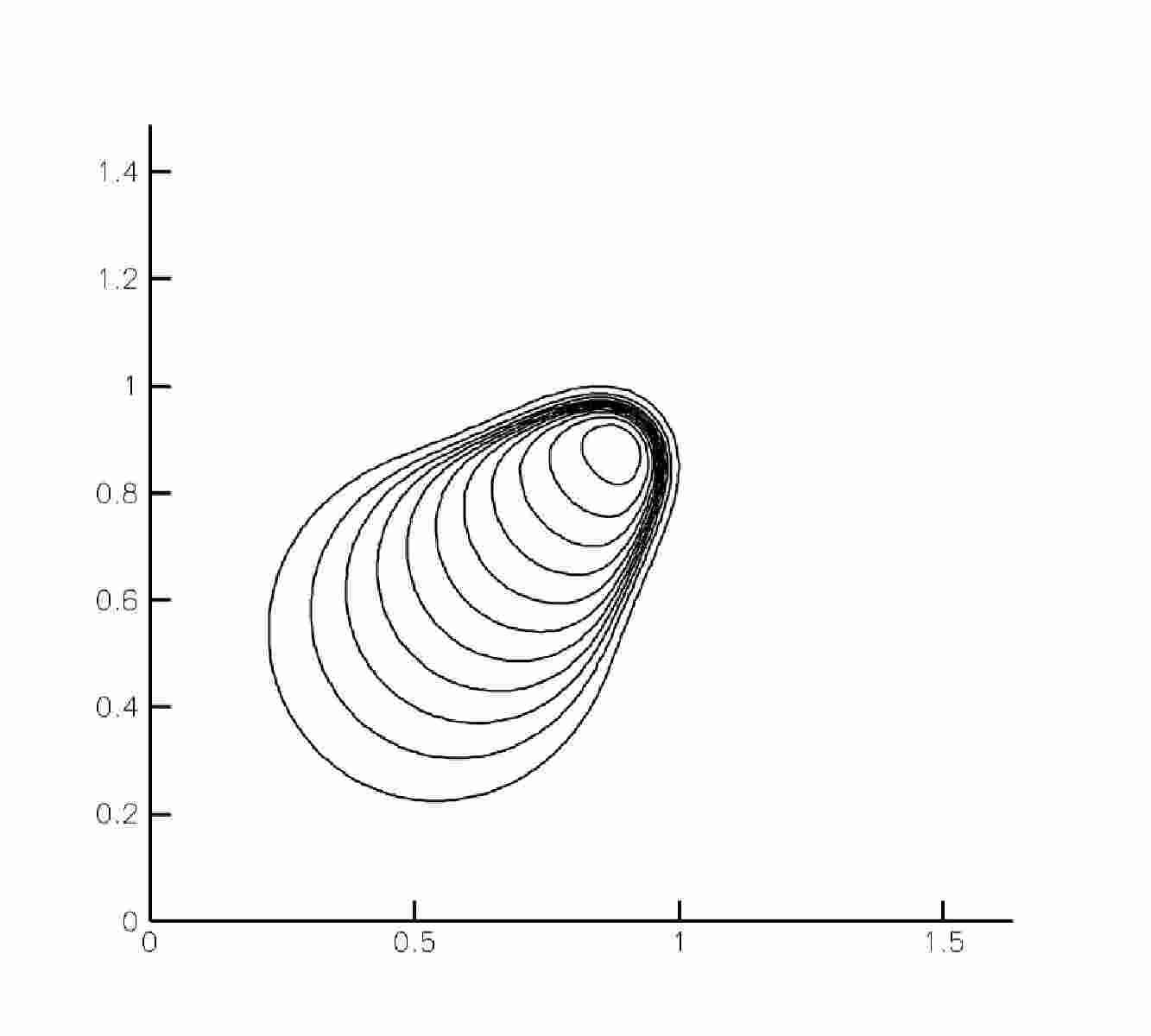}
\end{center} \end{minipage}
\begin{minipage}{0.29\linewidth} \begin{center}
  \includegraphics[width=\linewidth]{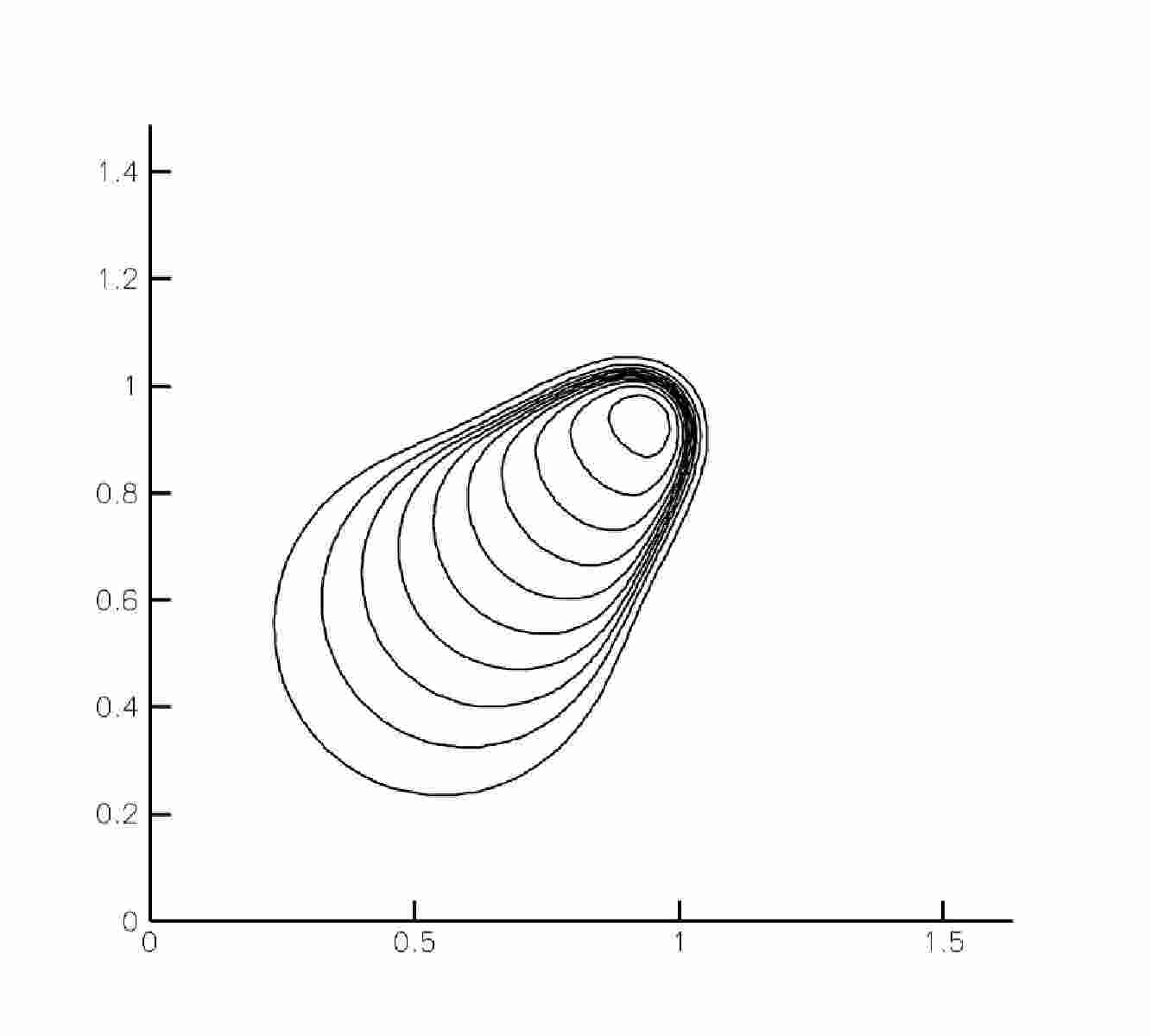}
\end{center} \end{minipage}\\
\begin{minipage}{0.29\linewidth}\begin{center} (d) \end{center}\end{minipage} 
\begin{minipage}{0.29\linewidth}\begin{center} (e) \end{center}\end{minipage} 
\begin{minipage}{0.29\linewidth}\begin{center} (f) \end{center}\end{minipage}
\caption{Contour plots showing the evolution of 2D CFB using a Helmholtz filter with a Gaussian pulse as the
initial condition. $\alpha=0.08$ with 128 $\times$ 128 grid points.}
\label{2Dhelmshock}
\end{center}
\end{figure}

\begin{figure}[!ht]
\begin{center}
\begin{minipage}{0.29\linewidth} \begin{center}
  \includegraphics[width=\linewidth]{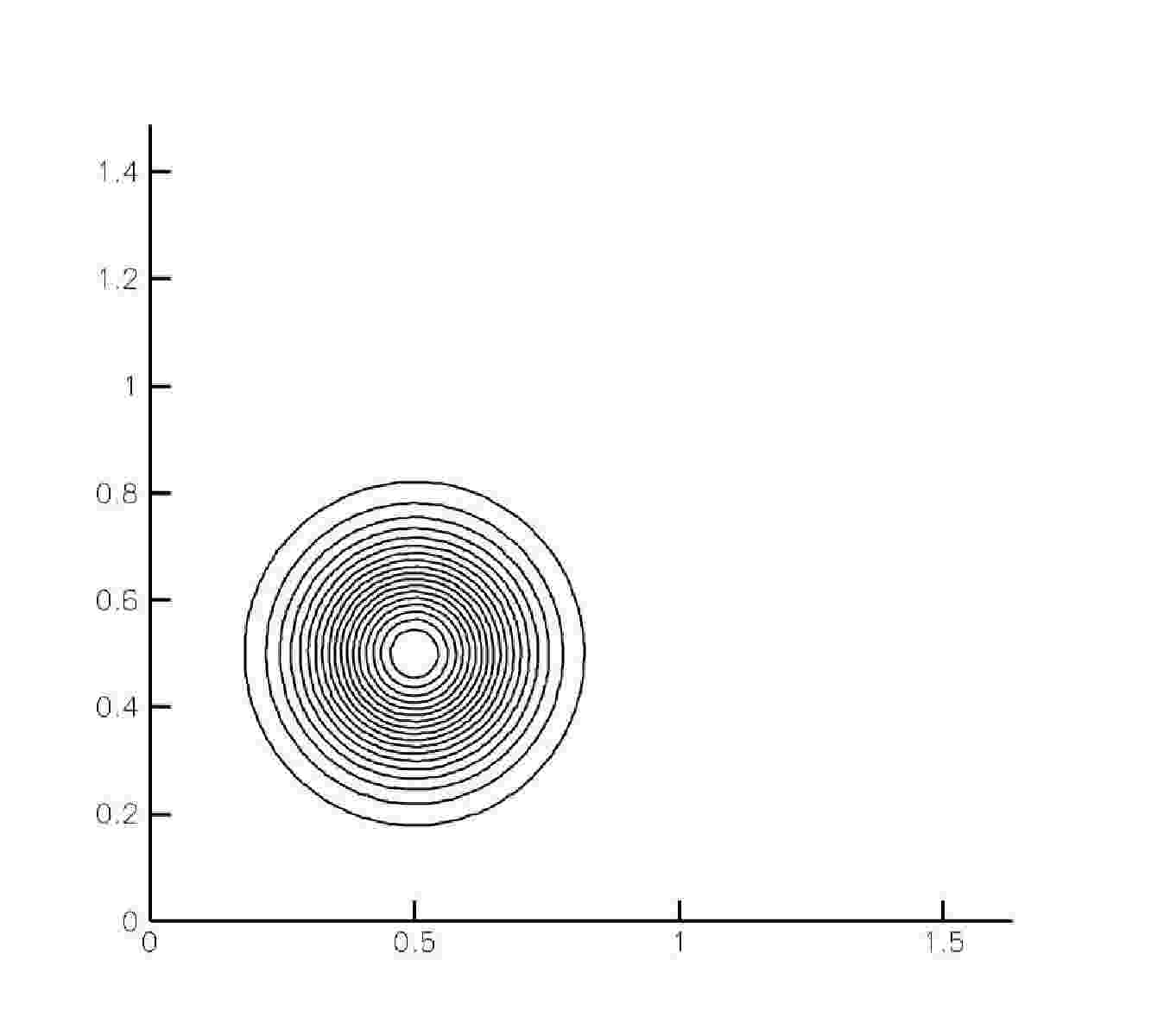}
\end{center} \end{minipage}
\begin{minipage}{0.29\linewidth} \begin{center}
  \includegraphics[width=\linewidth]{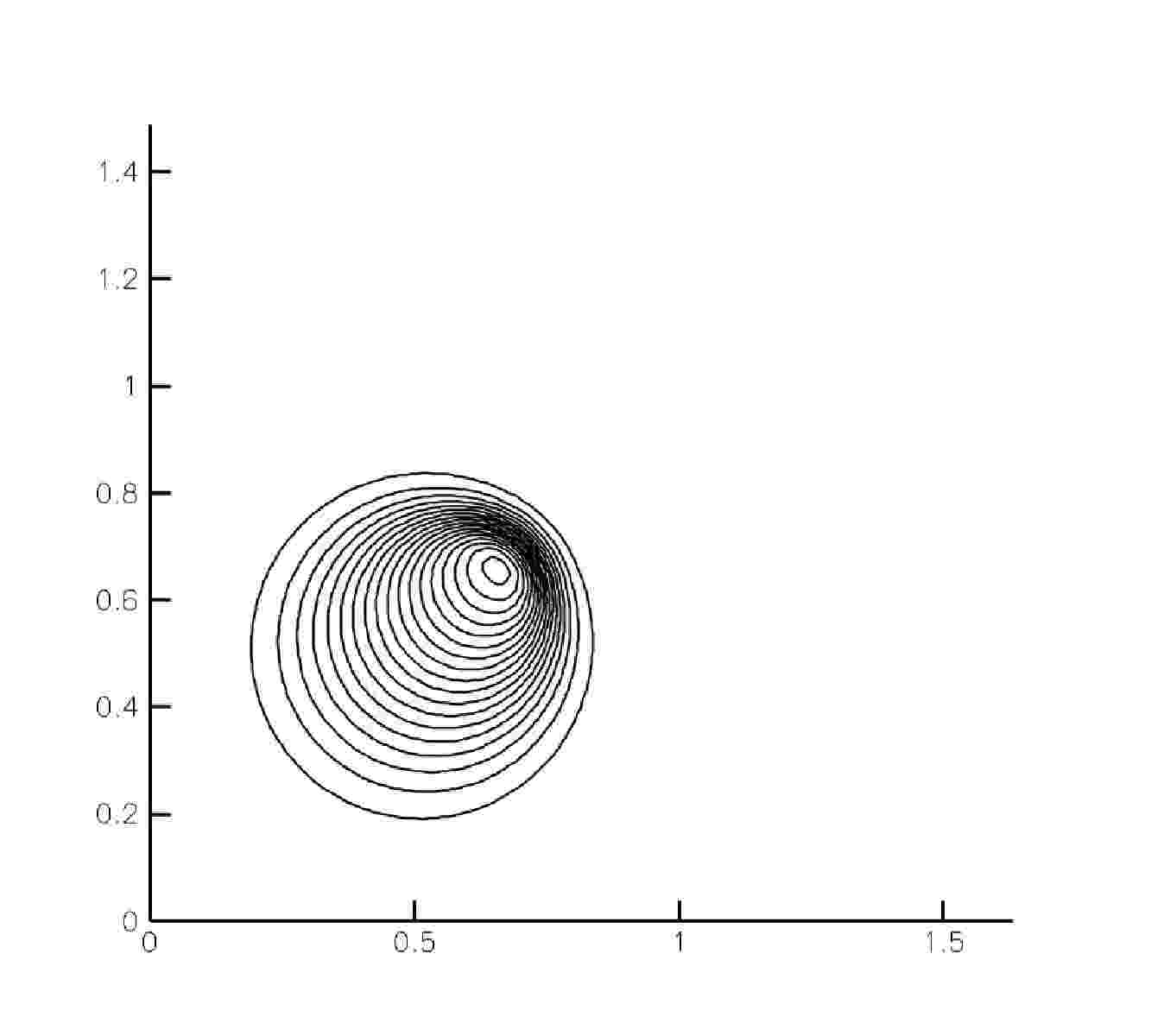}
\end{center} \end{minipage}
\begin{minipage}{0.29\linewidth} \begin{center}
  \includegraphics[width=\linewidth]{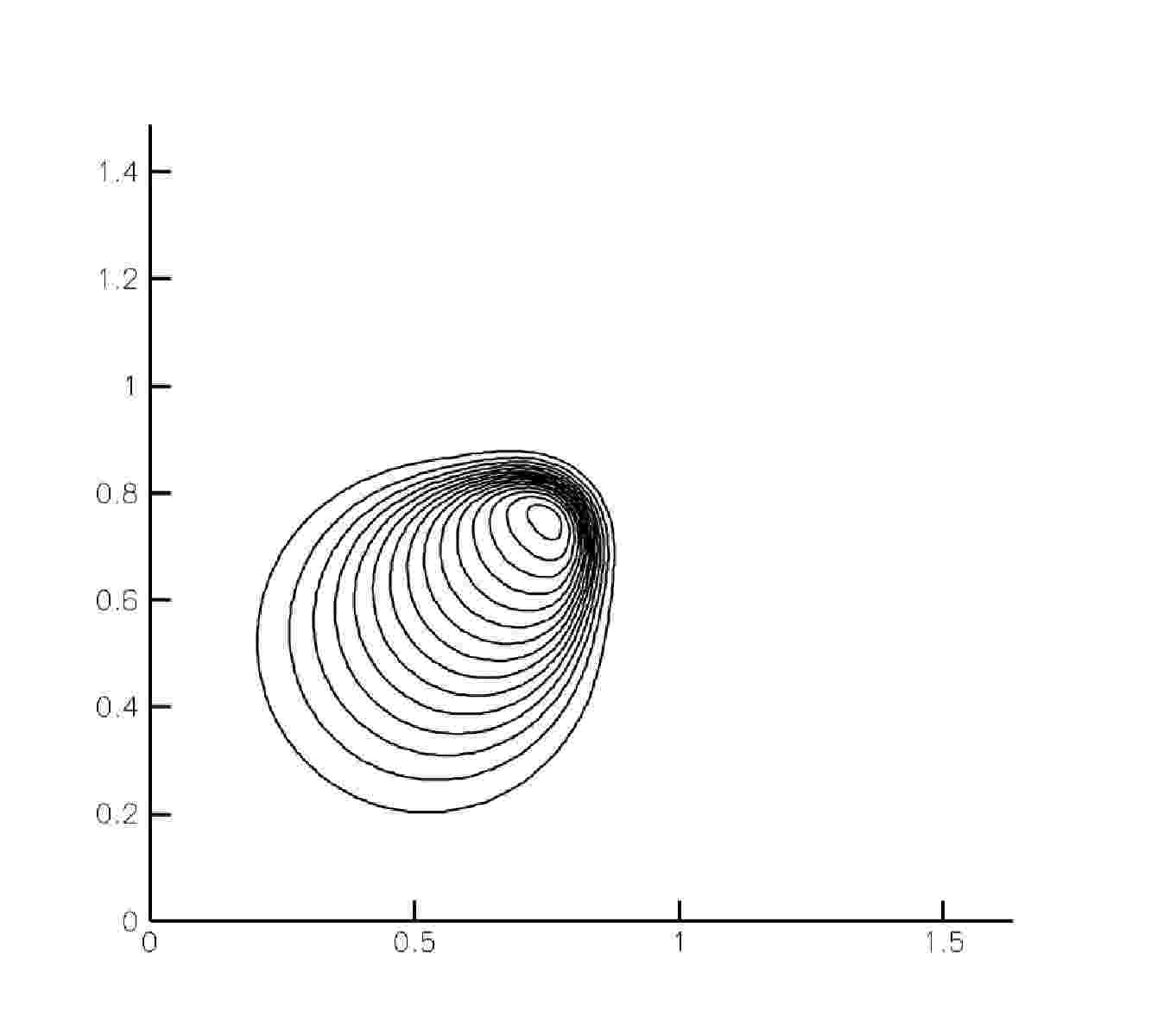}
\end{center} \end{minipage}\\
\begin{minipage}{0.29\linewidth}\begin{center} (a) \end{center}\end{minipage} 
\begin{minipage}{0.29\linewidth}\begin{center} (b) \end{center}\end{minipage} 
\begin{minipage}{0.29\linewidth}\begin{center} (c) \end{center}\end{minipage}\\
\begin{minipage}{0.29\linewidth} \begin{center}
  \includegraphics[width=\linewidth]{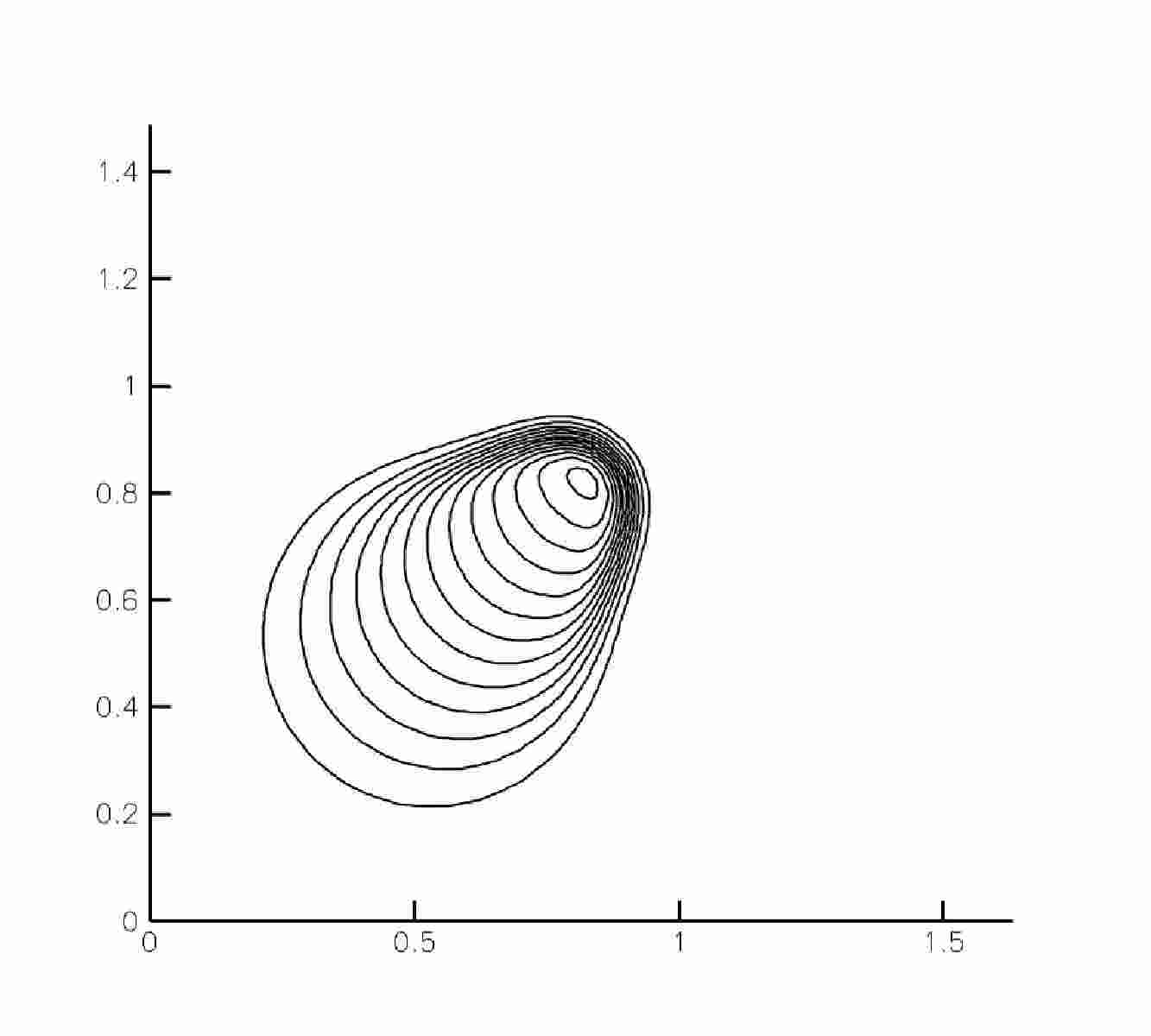}
\end{center} \end{minipage}
\begin{minipage}{0.29\linewidth} \begin{center}
  \includegraphics[width=\linewidth]{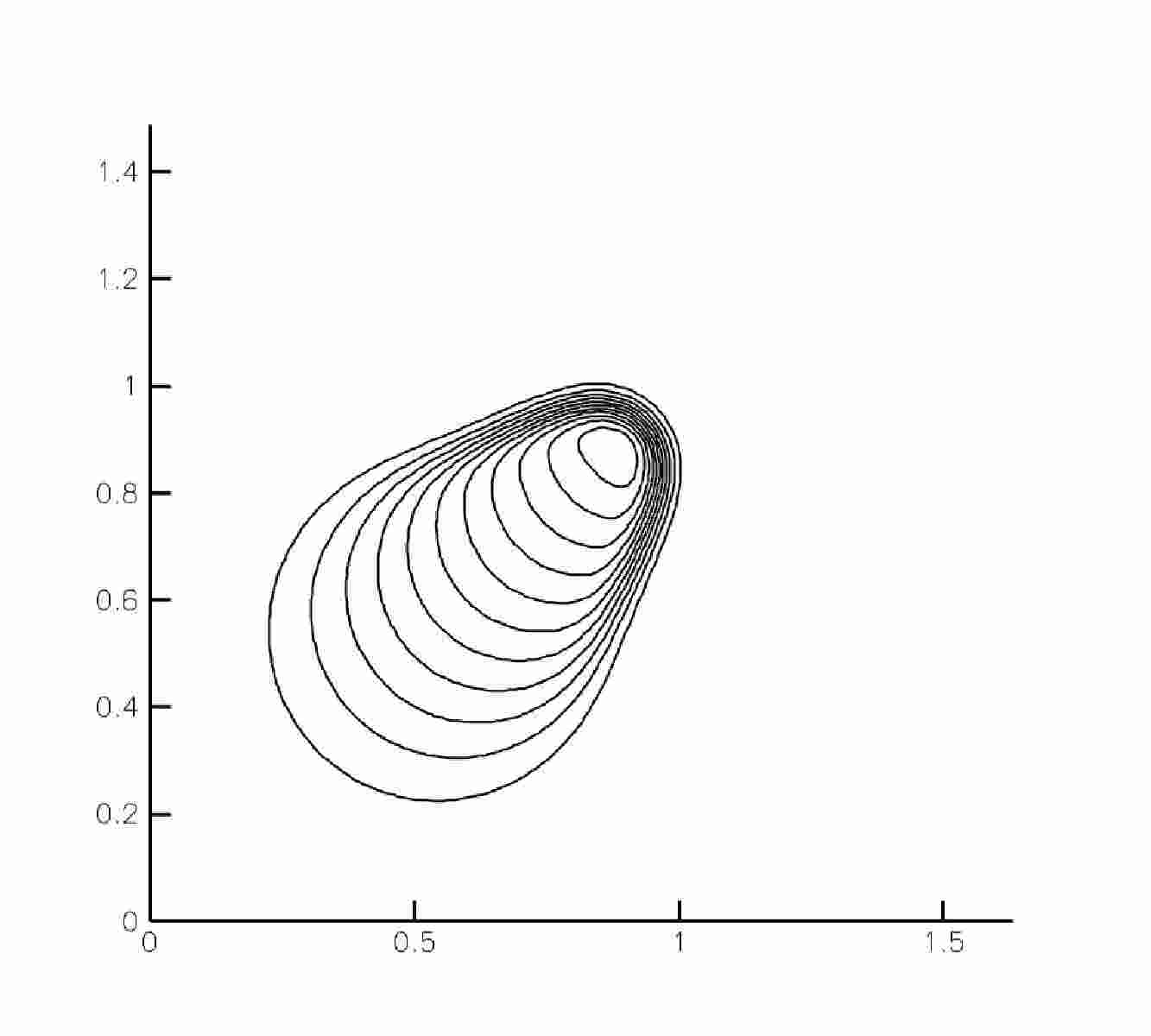}
\end{center} \end{minipage}
\begin{minipage}{0.29\linewidth} \begin{center}
  \includegraphics[width=\linewidth]{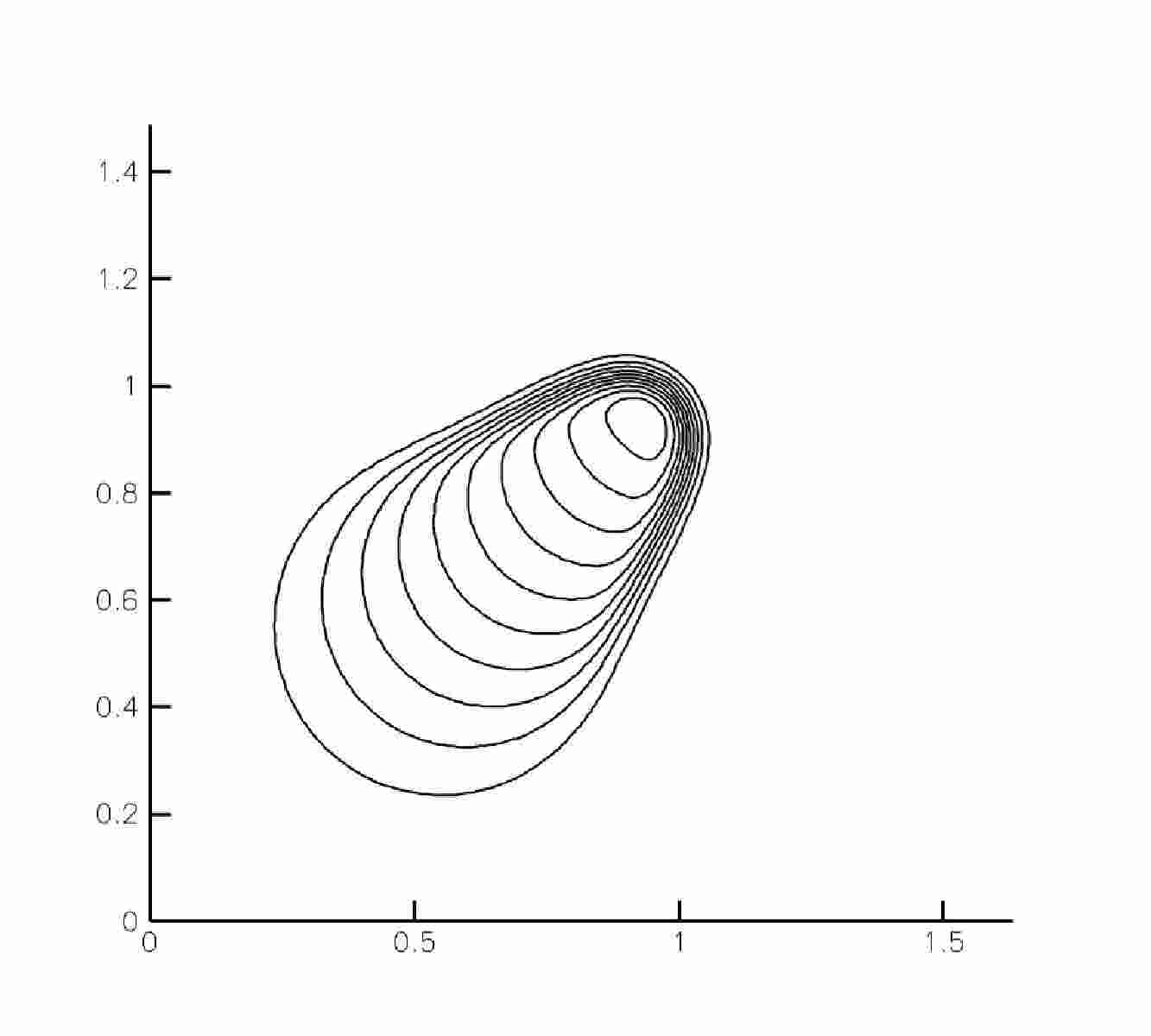}
\end{center} \end{minipage}\\
\begin{minipage}{0.29\linewidth}\begin{center} (d) \end{center}\end{minipage} 
\begin{minipage}{0.29\linewidth}\begin{center} (e) \end{center}\end{minipage} 
\begin{minipage}{0.29\linewidth}\begin{center} (f) \end{center}\end{minipage}
\caption{Contour plots showing the evolution of 2D CFB using a Gaussian filter with a Gaussian pulse as the 
initial condition.  $\alpha=0.08$ with 128 $\times$ 128 grid points.}
\label{2DGaussshock}
\end{center}
\end{figure}

\subsection{Shock Thickness}

One of the characteristics of viscous Burgers equation is that as $\nu$ becomes
smaller, the shocks formed by the solution become thinner and steeper. Heuristically, smaller
$\nu$'s allow more energy into higher wavemodes, which causes a steeper gradient.

Numerical simulations were run for various values of $\alpha$ and various
filters in Equations \eref{CFBa} and \eref{CFBb}, to examine the effects upon shock
thickness. Smaller $\alpha$ correlates with less dampening in the high
wavemodes, thus allowing steeper shocks much like the effect of the viscous
term. Figure \ref{shockthickness}a shows shocks for the Helmholtz filter for
different values of $\alpha$.  As $\alpha$ decreases the shocks get thinner. 
Similar results hold true for all filters and simulations made in two
dimensions.

\begin{figure}[!ht]
\begin{center}
\begin{minipage}{0.48\linewidth} \begin{center}
  \includegraphics[width=.9\linewidth]{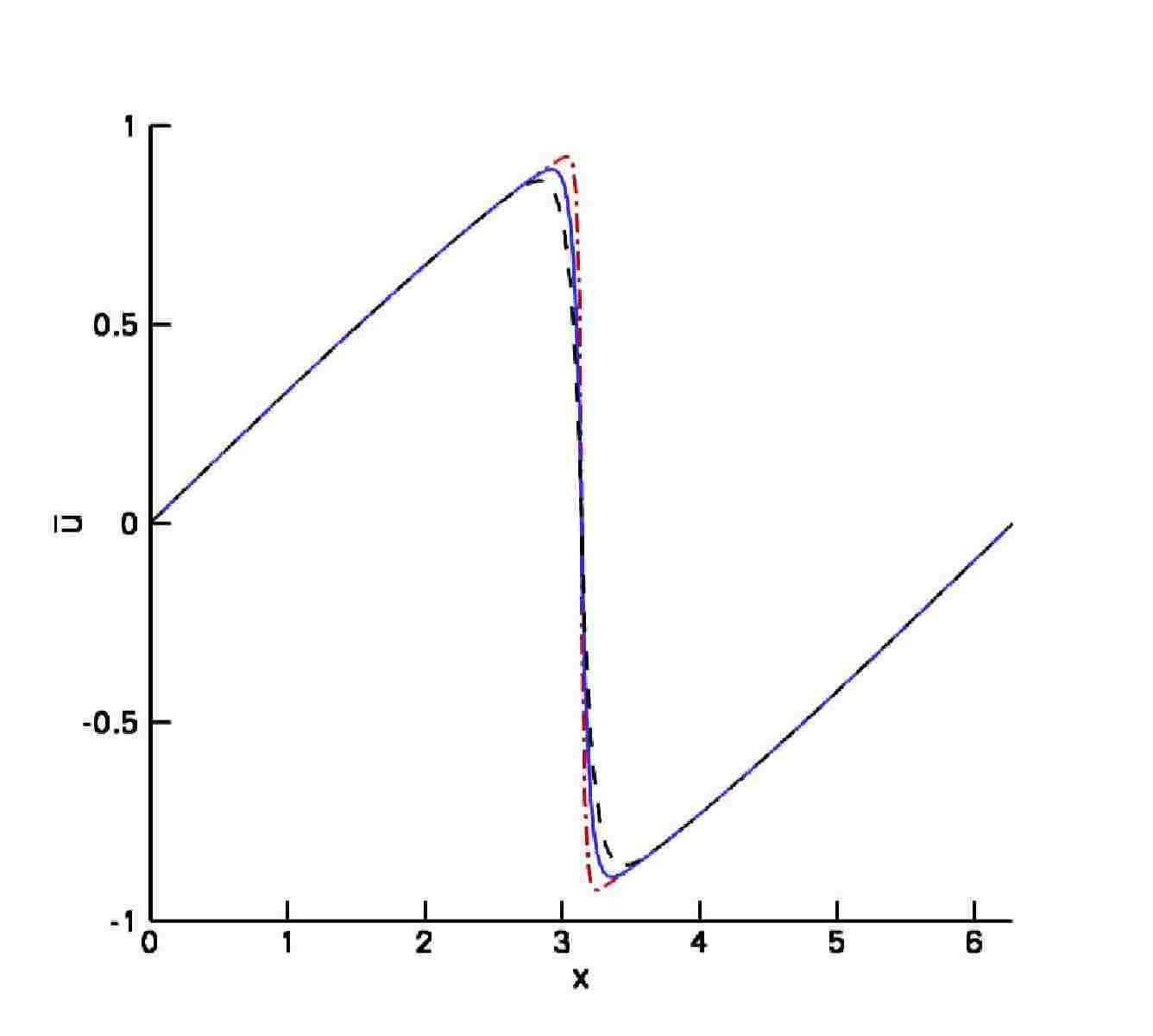}
\end{center} \end{minipage}
\begin{minipage}{0.48\linewidth} \begin{center}
  \includegraphics[width=.9\linewidth]{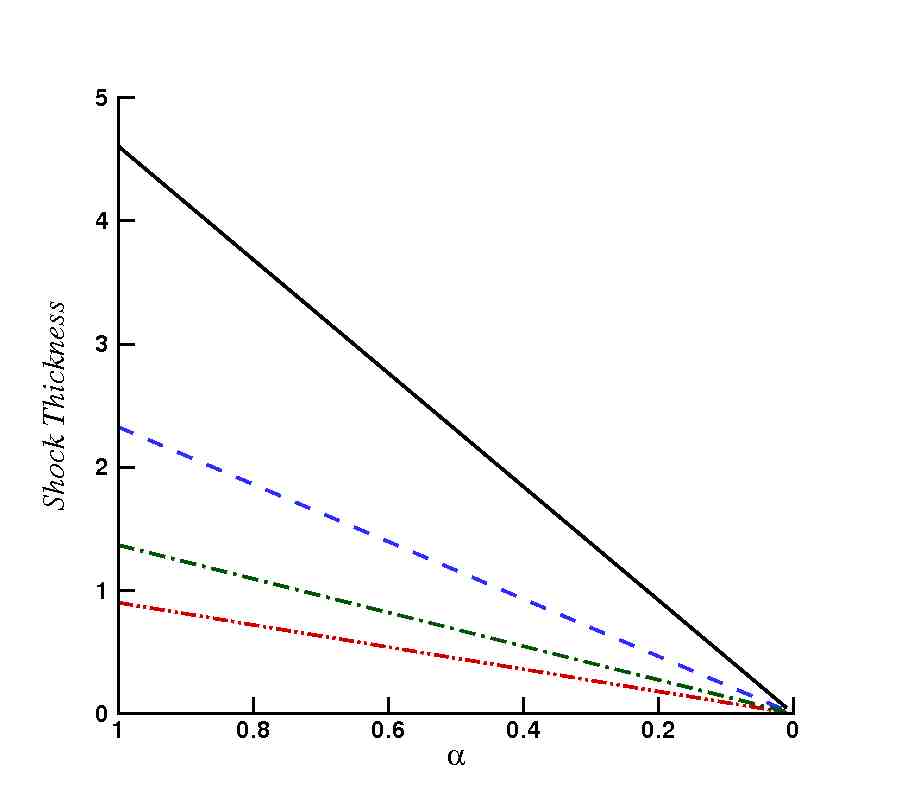}
\end{center} \end{minipage}\\
\begin{minipage}{0.48\linewidth}\begin{center} (a) \end{center} \end{minipage}
\begin{minipage}{0.48\linewidth}\begin{center} (b) \end{center} \end{minipage}
\caption{The thickness of the shocks formed in CFB vary depending upon the value of $\alpha$.  (a) As $\alpha$ decreases the thickness of the shock decreases. $\alpha=0.08$=\dashed, $\alpha=0.05$=\full, $\alpha=0.02$=\dotted (b) The thickness of the traveling shock decreases linearly with $\alpha$. Helmholtz filter,
$g(x)=\frac{1}{2}exp(-|x|)$=\full, Gaussian filter, $g(x)=\pi^{-1/2}exp(-x^2)$=\dashed, Hat filter, $g(x)=\{x-1 \text{ for }
x\in(-1,0),1-x \text{ for } x\in(0,1), 0 \text{ otherwise}\}$=\chain, Tophat filter, $g(x)=\{1 \text{ for } x\in [-\frac{1}{2},\frac{1}{2}], 0 \text{ otherwise} \}$ =\dashddot.}
\label{shockthickness}
\end{center}
\end{figure}

Shock thickness can be examined analytically by looking at the traveling wave solution.   Here the thickness of the shock is defined to be the length over which 90\% of the amplitude change takes place, centered at the center.  From section \ref{travelingwavesolution}, the traveling wave solution is
\begin{equation}
\ubar(x,t)=(u_r-u_l)\int^{x-ct}_{-\infty}g^\alpha(s)\,ds +u_l.
\end{equation}
The thickness of the shock will then be $2\alpha b$, where b is the value where
\begin{equation}
\int_{-b}^b g(x)\, dx=\int_{-\alpha b}^{\alpha b}
\frac{1}{\alpha}g(\frac{x}{\alpha}) \, dx=0.9.
\end{equation}
This length is independent of $u_r$ and $u_l$.  As such, the thickness of the
shock varies linearly on the parameter $\alpha$.  Figure \ref{shockthickness}b
shows shock thickness versus $\alpha$ for different filters.

\section{Spectral Energy}\label{spectralenergy}

Analytically CFB has been examined on an infinite domain.  This is of course
impossible numerically, so in numerical experiments, simulations were performed
on the domains $[0, 2\pi]$ and $[0, 2\pi] \times [0, 2\pi]$ for one and two dimensions, with periodic
boundary conditions.  Since simulations were performed with a pseudospectral method, obtaining the spectral energy
decompositions was easily done.

The examination of the spectral energy decompositions begins by stating a
special case of Sobelev Embedding Theorem that is found in Hunter and
Nachtegaele \cite{HunterJK:01a}.  There it is stated that for a function $f:
\mathbb{T}^n \to \mathbb{C}$ defined by
\begin{equation}
f(\xb)=\sum_{\mb\in \mathbb{Z}^n} a_\mb e^{i \mb \cdot \xb}
\end{equation}
that if 
\begin{equation}
\sum_{\mb\in \mathbb{Z}^n} |\mb|^{2p} |a_\mb|^2 <\infty  
\label{continuitycondition}
\end{equation}
for some $p>\frac{n}{2}$ then $f$ is continuous.  Furthermore if condition
(\ref{continuitycondition}) holds true for $p>j+\frac{n}{2}$ then $f$ has $j$
continuous derivatives.

If spectral energy $E(k)$ is defined as
\begin{equation}
\label{spectralenergydef}
E(k)=\sum_{|\mb| = k} |a_\mb|^2,
\end{equation}
then  \eref{continuitycondition} can then be rewritten as
\begin{equation}
\label{continuitycondition2}
\sum_{k=0}^\infty k^{2p} E(k) < \infty.
\end{equation}

Thus it can be seen that the rate at which $E(k)$ decays as $k \to \infty$, can
determine the smoothness of the equation.  In one dimension, if $E(k)$ decays
faster than $\frac{1}{k^2}$, continuity is guaranteed. In a logarithmic
plot, this correlates with an energy cascade slope of less than -2.  In $n$ dimensions, if $E(k)$ decays faster than $\frac{1}{k^{n+1}}$, (energy cascade slope less than -$(n+1)$) continuity is guaranteed.  Existence of continuous derivatives can be guaranteed in a similar
fashion.

Inviscid Burgers and viscous Burgers (in its inertial frame) have been shown to
have an energy cascade slope of -2 during shock formation \cite{Kraichnan:68a, GurbatovSN:97a}. 
Numerical simulations suggest that CFB also has an energy cascade slope of -2
for the unfiltered velocity, independent of filter and dimension.  The filtered
velocity's energy cascade slope is then highly dependent upon the filter employed. 
In figure \ref{spectralenergyfig}, the spectral energy decompositions for
simulations with the Helmholtz and Gaussian filter in one and two dimensions are
shown. The spectral energy decompositions are shown for a time well after the
shocks are fully developed.  The -2 energy cascade slope is clearly seen, as well
as the effect of the filters on the filtered velocity's energy cascade slope. 
The Helmholtz filtered velocity displays a -6 energy cascade slope for high wave
numbers, guaranteeing a continuous derivative in one and two dimensions. The
Gaussian filtered velocity decays faster than any polynomial for high wave
numbers, thus guaranteeing infinite continuous derivatives. This is similar to the solutions of viscous Burgers equation.  The viscosity term causes an exponential drop in the energy spectrum at high wave numbers, also guaranteeing infinite continuous derivatives.

In section \ref{ExistenceTheorem}, the unfiltered velocity and thus the filtered
velocity were proven to have a continuous derivative for continuously differentiable initial conditions.  In
section \ref{unfilteredvelocity}, it was suggested that despite this, the unfiltered
velocity can form shocks that will grow narrower than any finite resolution. 
Thus for numerical purposes the unfiltered velocity can be considered
discontinuous.  The filtered velocity will appear continuous to numerical
simulations if the energy cascade slope satisfies conditions dictated by the
special case of the Sobolev Embedding Theorem mentioned.

\begin{figure}[!ht]
\begin{center}
\begin{minipage}{0.48\linewidth} \begin{center}
  \includegraphics[width=.9\linewidth]{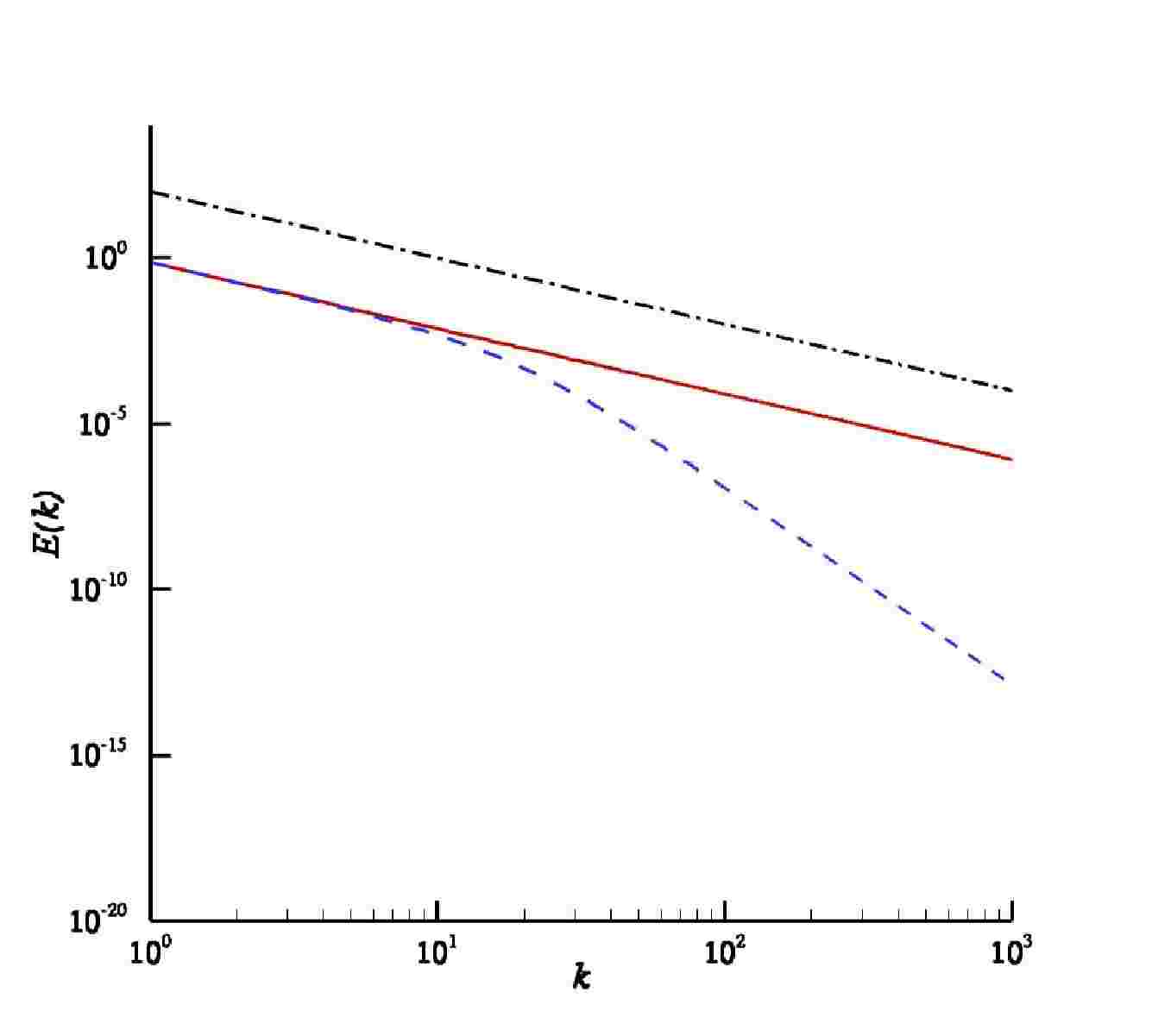}
\end{center} \end{minipage}
\begin{minipage}{0.48\linewidth} \begin{center}
  \includegraphics[width=.9\linewidth]{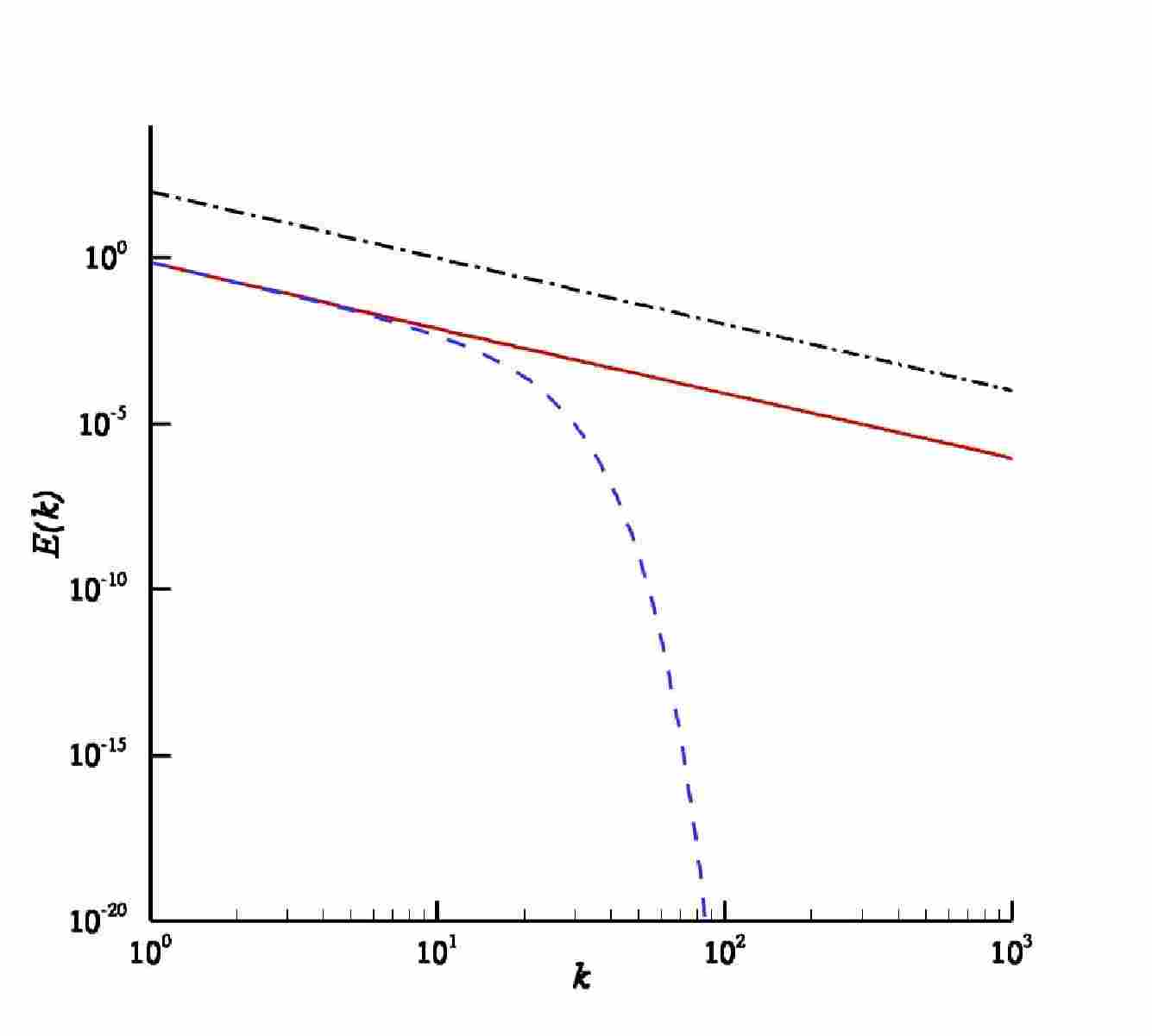}
\end{center} \end{minipage}\\
\begin{minipage}{0.48\linewidth}\begin{center} (a) \end{center} \end{minipage}
\begin{minipage}{0.48\linewidth}\begin{center} (b) \end{center} \end{minipage}
\begin{minipage}{0.48\linewidth} \begin{center}
  \includegraphics[width=.9\linewidth]{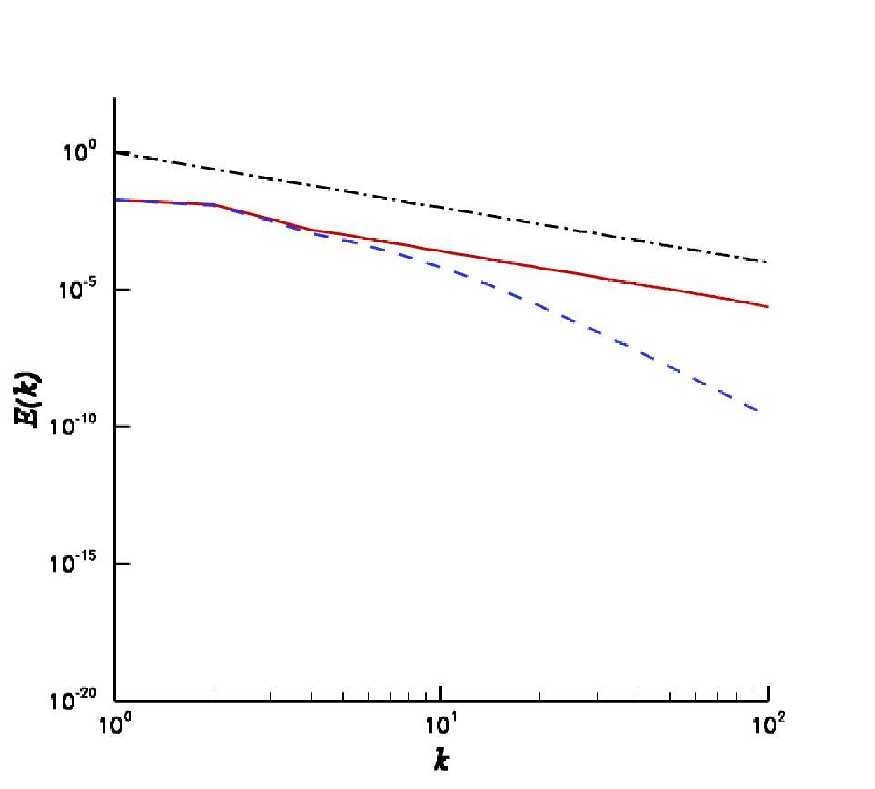}
\end{center} \end{minipage}
\begin{minipage}{0.48\linewidth} \begin{center}
  \includegraphics[width=.9\linewidth]{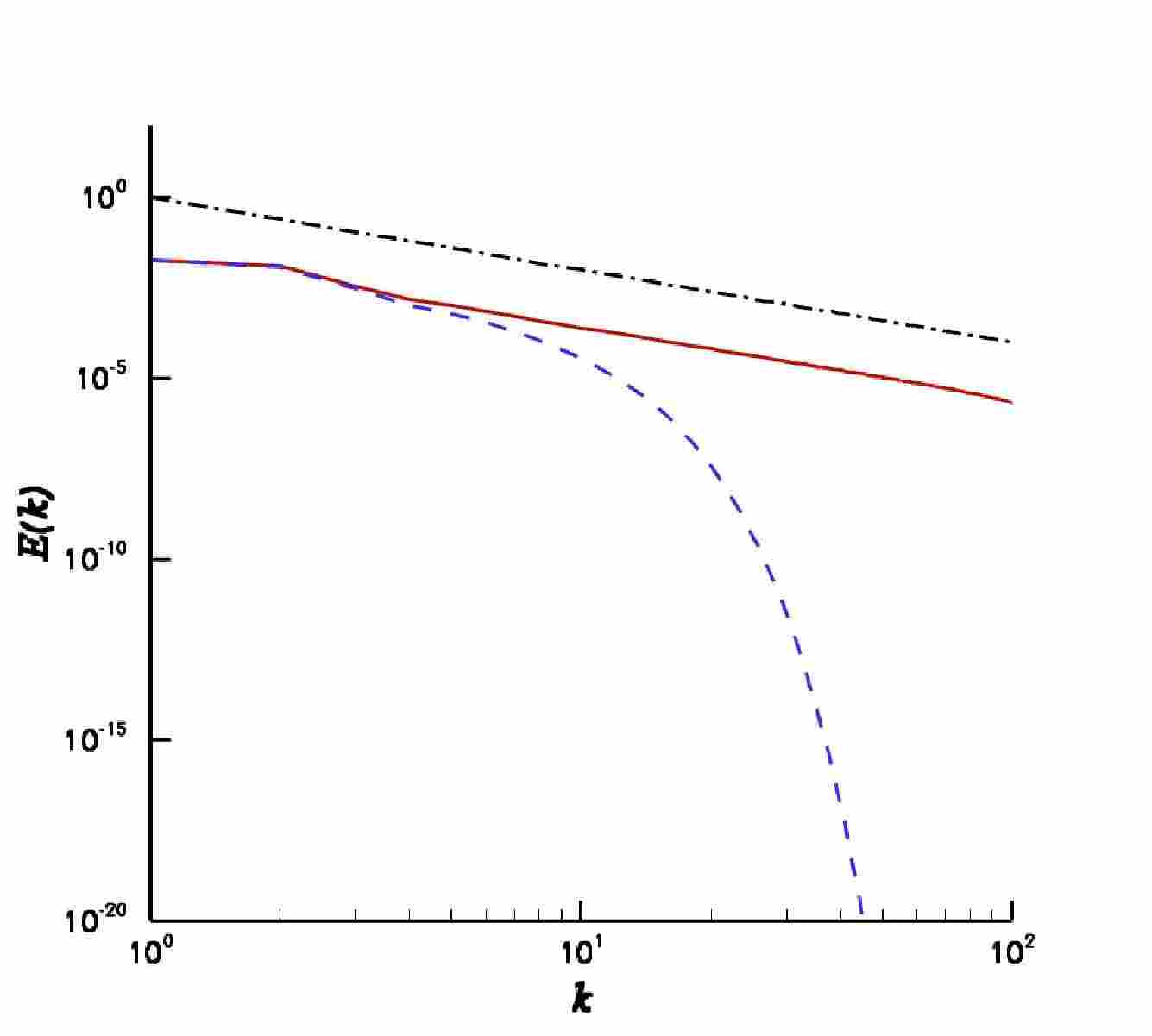}
\end{center} \end{minipage}\\
\begin{minipage}{0.48\linewidth}\begin{center} (c) \end{center} \end{minipage}
\begin{minipage}{0.48\linewidth}\begin{center} (d) \end{center} \end{minipage}
\caption{Spectral energy decompositions for 1D and 2D CFB using Helmholtz and Gaussian filters .  All simulations
were performed with $\alpha = 0.08$. (a) and (b) are one dimensional simulations
at $t=3$ with initial conditions $u(x)=\sin(x)$ for Helmholtz and Gaussian filters respectively.  In figure (a) one can see that the energy cascade slope drops dramatically after wavelength $\frac{1}{\alpha}$.   This occurs similarly in figure (b), with the exception that the spectral energy decreases exponentially due to the Gaussian filter.  (c) and (d) are two
dimensional simulations at $t=1$ with Gaussian pulses as the initial conditions, again with Helmholtz and Gaussian filters respectively.  The spectral energy decompositions show similar characteristics as the one dimensional simulations.  The
spectral energy of $\ub$ \full, $\ubarb$ \dashed, and a reference slope of -2 \chain \, are shown.  }
\label{spectralenergyfig}
\end{center}
\end{figure}

\section{Energy Norms}\label{energynorms}

The kinetic energy for Burgers equation can be defined as $\int \frac{1}{2}\ub
\cdot \ub$.  In CFB, there are two different velocities presenting three
different kinetic energies: $\int \frac{1}{2}\ub \cdot \ub$, $\int
\frac{1}{2}\ub \cdot \ubarb $, and $\int \frac{1}{2}\ubarb \cdot \ubarb$. 
Analytical comparisons between the energy decay rates of viscous Burgers equation and CFB
have proven fruitless with the exception of the one dimensional case using the
Helmholtz filter.  In one dimension, the nonlinear term is
more easily handled, and the Helmholtz filter \eref{helmholtzfilterb} provides a
convenient inverse, allowing greater manipulation capabilities.

For one dimensional viscous Burgers equation, kinetic energy can be defined as
\begin{equation}
E(t)=\int \frac{u (x,t)^2}{2} dx.
\label{vbe}
\end{equation}
The decay rate of the energy is easily calculated to be
\begin{equation}
\label{viscousdecayrate}
\frac{d}{dt}\int \frac{u ^2}{2} dx=-\nu\int u_x^2 dx.
\end{equation}
For one dimensional CFB using the Helmholtz filter, the energy decay rates are 

\label{energydecayfor1Dcrb}
\begin{eqnarray}
\label{uudecayrate}
\frac{d }{d t}\int \frac{u^2}{2} dx=&\alpha^2 \int
\frac{(\ubar_x)^2}{4}(\ubar_x+u_x)dx\\
\label{uubardecayrate}
\frac{d }{d t}\int \frac{\ubar u}{2} dx=&\alpha^2\int(\ubar_x)^3dx\\
\label{ubarubardecayrate}
\frac{d }{d t}\int \frac{\ubar^2}{2} dx=&-3\alpha^2\int \ubar_x
\ubar_{xx} \left(I-\alpha^2 \p x^2 \right)^{-1}(\ubar)dx.
\end{eqnarray}

Here a similarity in structure can be seen, especially when comparing
\eref{viscousdecayrate} with \eref{uubardecayrate}.  The integrands of both are
the first derivatives of velocities, implying that the majority of the energy is lost through steep
gradients, a common concept. The similarity is interesting considering that in
one dimensional viscous Burgers equation, the energy is lost through the viscous term,
while the energy is lost in the nonlinear term for one dimensional CFB.

Numerical simulations were conducted for both viscous Burgers equation and CFB
in one and two dimensions.  Figure \ref{viscousenergy} shows the energy decay
rates for viscous Burgers equation.  Figure \ref{CFBenergy} shows the energy
decay rates for CFB using the Helmholtz and Gaussian filters.  In figure
\ref{CFBenergy}, it can be seen that the different kinetic energies tend toward each other with primary difference occurring in amplitude.  Comparing figures
\ref{viscousenergy} and \ref{CFBenergy}, it is seen that the energy decay rates
behave quantitatively similar between CFB and viscous Burgers equation.  Energy
decay is minimal until the formation of a shock and then the energy decreases
rapidly upon formation of the shock.

\begin{figure}[!ht]
\begin{center}
\begin{minipage}{0.48\linewidth} \begin{center}
  \includegraphics[width=.9\linewidth]{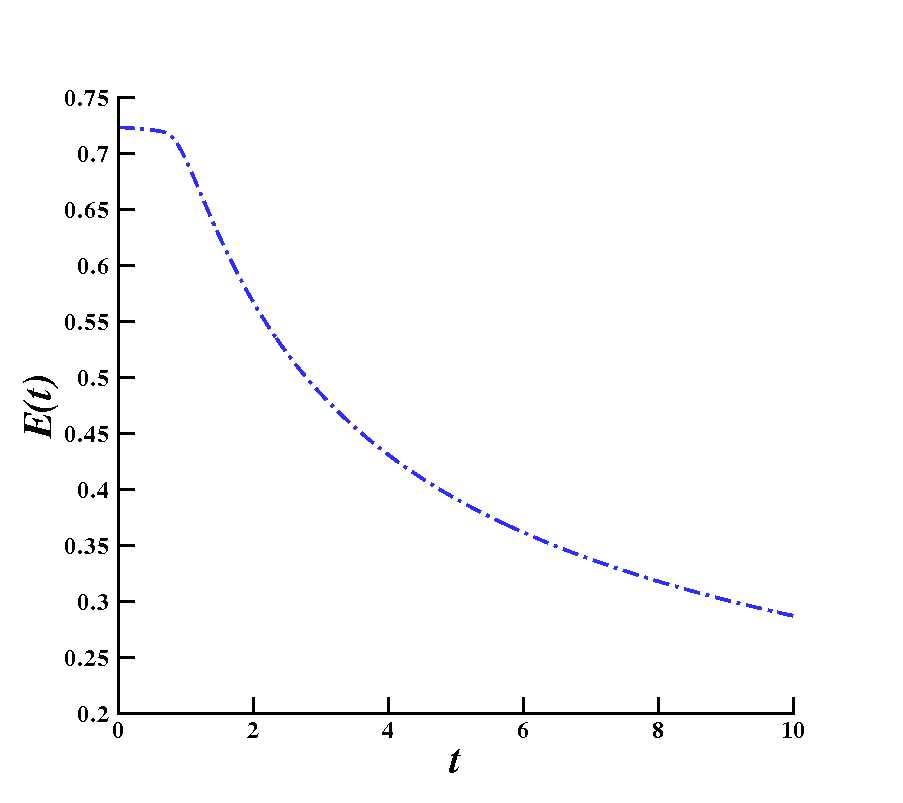}
\end{center} \end{minipage}
\begin{minipage}{0.48\linewidth} \begin{center}
  \includegraphics[width=.9\linewidth]{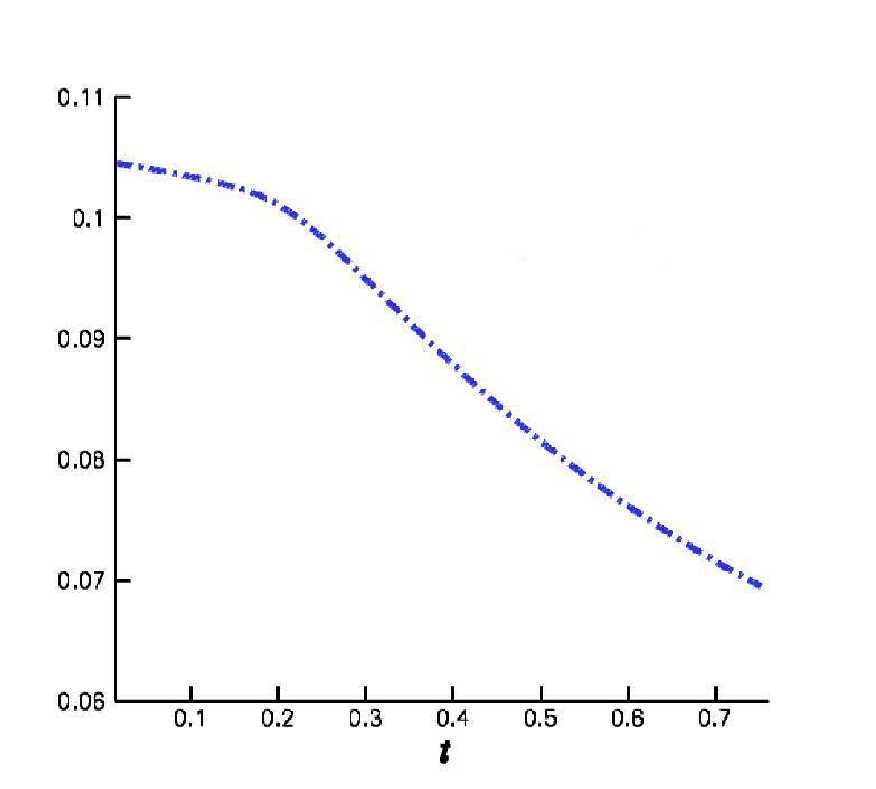}
\end{center} \end{minipage}\\
\begin{minipage}{0.48\linewidth}\begin{center} (a) \end{center} \end{minipage}
\begin{minipage}{0.48\linewidth}\begin{center} (b) \end{center} \end{minipage}
\caption{Energy decay for 1D and 2D viscous Burgers equation. (a) The energy
decay for 1D viscous Burgers equation.  Initial conditions $u(x,0)=exp(-3x^2)$,
with $\nu=0.001$ and 1024 gridpoints.  (b) The energy decay for 2D viscous
Burgers equation.  Initial conditions $u_1(x,y,0)=exp( -30x^2-30y^2)$ and
$u_2(x,y,0)=exp( -30x^2-30y^2)$, with $\nu=0.001$ and 128 $\times$ 128
gridpoints. $\int \ub \cdot \ub$ \chain.}
\label{viscousenergy}
\end{center}
\end{figure}

\begin{figure}[!ht]
\begin{center}
\begin{minipage}{0.48\linewidth} \begin{center}
  \includegraphics[width=.9\linewidth]{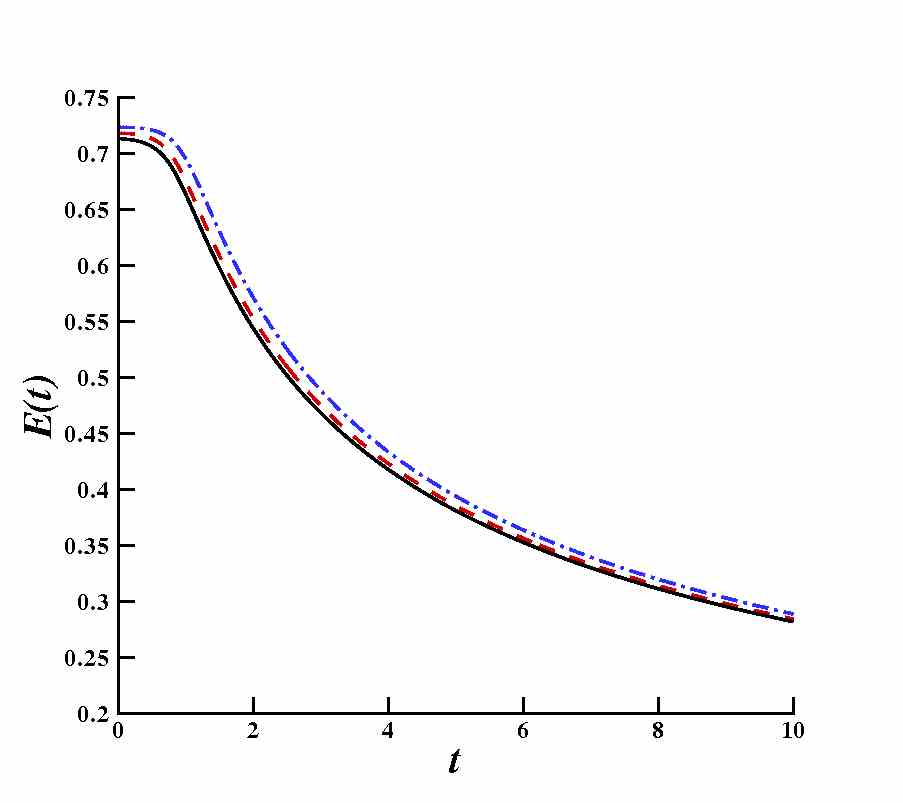}
\end{center} \end{minipage}
\begin{minipage}{0.48\linewidth} \begin{center}
  \includegraphics[width=.9\linewidth]{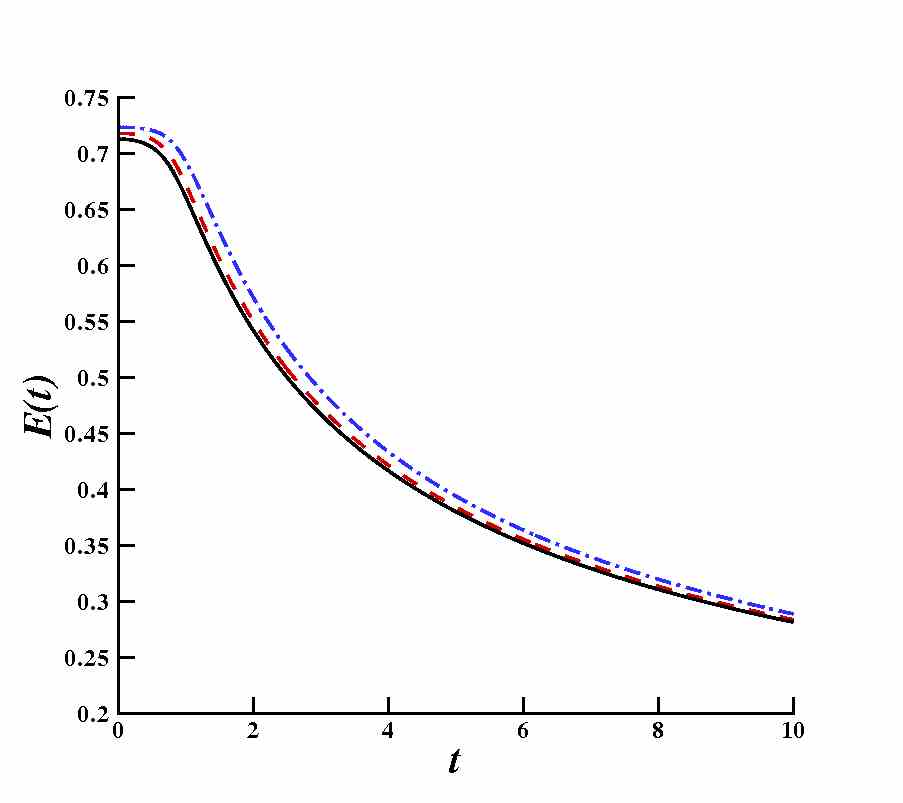}
\end{center} \end{minipage}\\
\begin{minipage}{0.48\linewidth}\begin{center} (a) \end{center} \end{minipage}
\begin{minipage}{0.48\linewidth}\begin{center} (b) \end{center} \end{minipage}
\begin{minipage}{0.48\linewidth} \begin{center}
  \includegraphics[width=.9\linewidth]{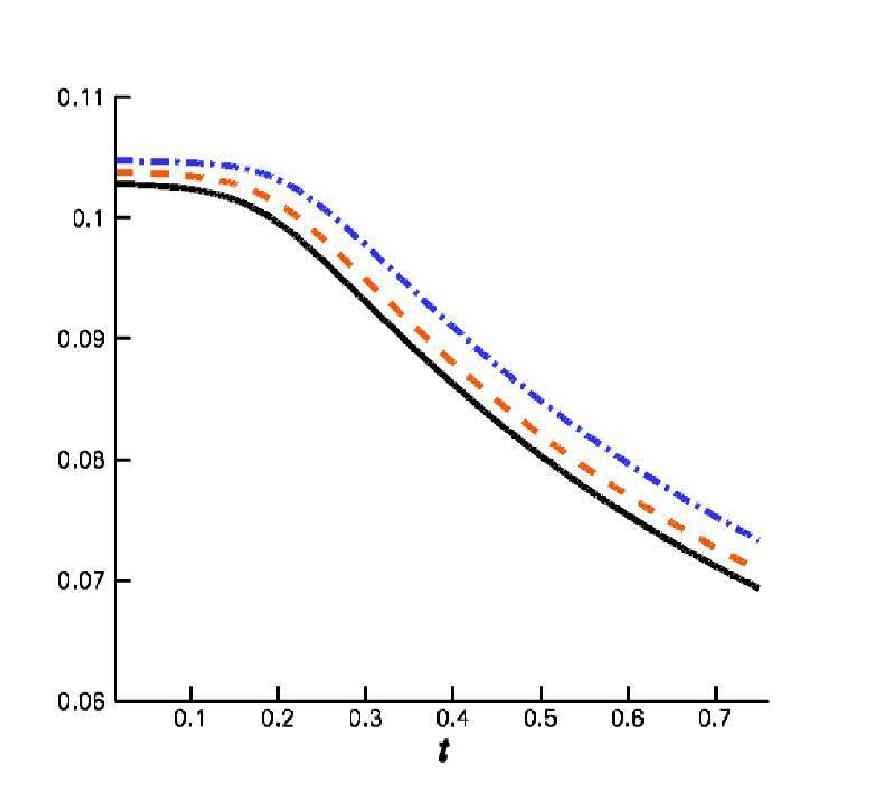}
\end{center} \end{minipage}
\begin{minipage}{0.48\linewidth} \begin{center}
  \includegraphics[width=.9\linewidth]{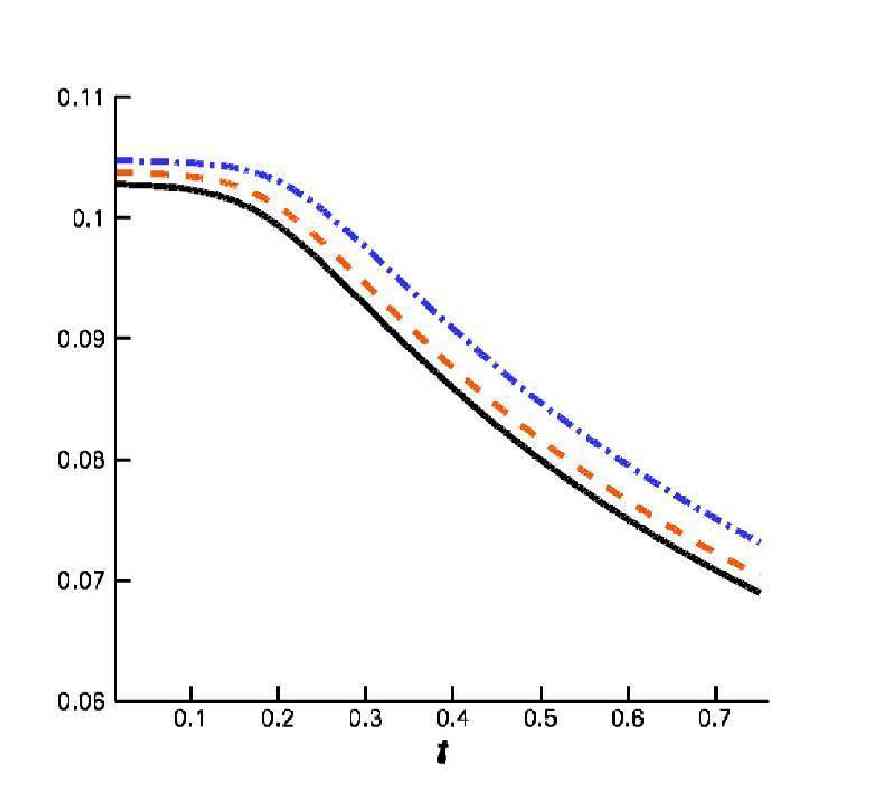}
\end{center} \end{minipage}\\
\begin{minipage}{0.48\linewidth}\begin{center} (c) \end{center} \end{minipage}
\begin{minipage}{0.48\linewidth}\begin{center} (d) \end{center} \end{minipage}
\caption{Energy decay for 1D and 2D CFB. $\int \ub \cdot \ub$ \chain, $\int \ub \cdot \ubarb$
\dashed,  $\int \ubarb \cdot \ubarb$ \full.  Figures (a) and (b) show the energy decay for 1D CFB 
with the Helmholtz and Gaussian filter respectively. Initial conditions $u(x,0)=exp(-3x^2)$, with $\alpha=0.05$ and 1024 gridpoints. Figures (c) and (d) show the energy decay for 2D CFB with the Helmholtz and Gaussian filter.  Initial conditions $u_1(x,y,0)=exp(-30x^2-30y^2)$ and $u_2(x,y,0)=exp( -30x^2-30y^2)$, with $\alpha=0.05$ and 128
$\times$ 128 gridpoints. Decay rates are similar to those seen in figure \ref{viscousenergy}.  }
\label{CFBenergy}
\end{center}
\end{figure}

\subsection{Energy Can Increase}

It is also important to notice that Equations \eref{uudecayrate}, \eref{uubardecayrate}, and \eref{ubarubardecayrate} are not sign definite as is Equation \eref{vbe}.  This shows while the energy for viscous Burgers equation must always decrease, CFB can experience an increase in energy. Specifically, it can be seen in Equation \eref{uubardecayrate}, that for steep decreasing gradients energy will be lost, but for steep increasing gradients energy will be gained.  However, due to the nature of the equation increasing gradients will decrease in steepness, while decreasing gradients will increase or remain steep.  This can be seen in the evolution of a Gaussian pulse, as in figure \ref{filtershock}. Thus any increase in energy should be brief.  

One would also suspect that as $\alpha \to 0$ that the brief energy increase would dissappear. If solutions to CFB are, in fact, limiting to solutions of inviscid Burgers equation, then it follows that the energy rates would approach those of inviscid Burgers equation, as well.

Figure \ref{positiveenergy} shows two numerical simulations where the initial conditions were chosen to generate an increase in energy.  In both cases the energy will briefly rise before beginning to decay as expected.  Figure \ref{positiveenergy2} shows three simulations with decreasing $\alpha$.  As was expected, the brief increase in energy becomes less dramatic as $\alpha$ decreases.

\begin{figure}[!ht]
\begin{center}
\begin{minipage}{0.48\linewidth} \begin{center}
  \includegraphics[width=.9\linewidth]{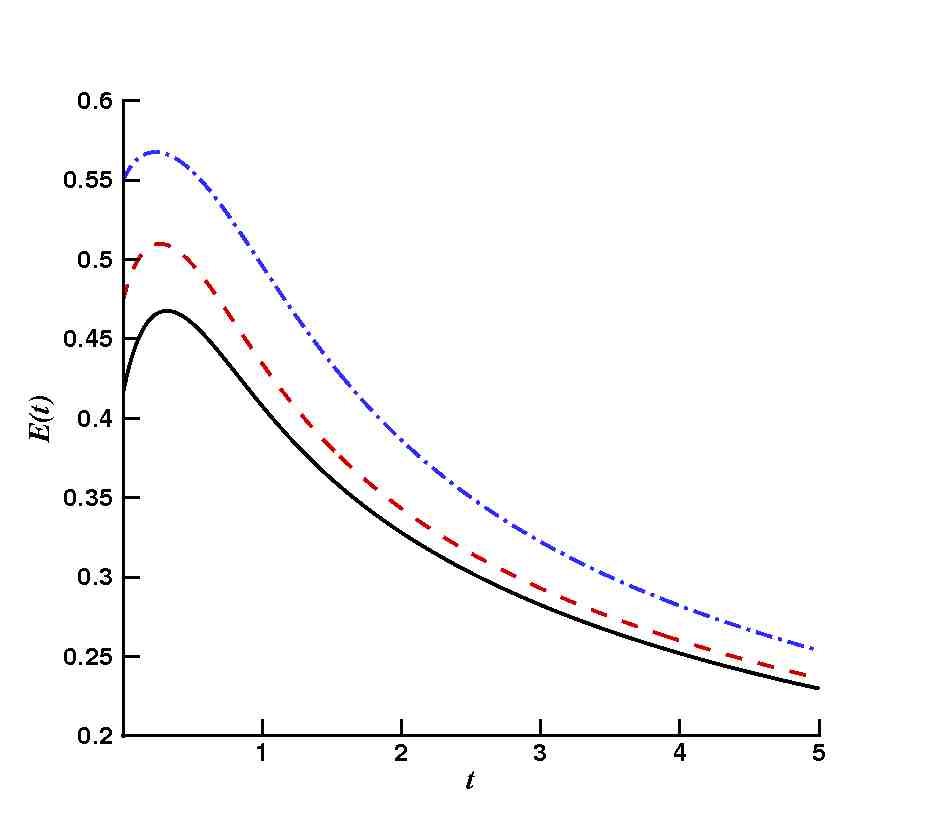}
\end{center} \end{minipage}
\begin{minipage}{0.48\linewidth} \begin{center}
  \includegraphics[width=.9\linewidth]{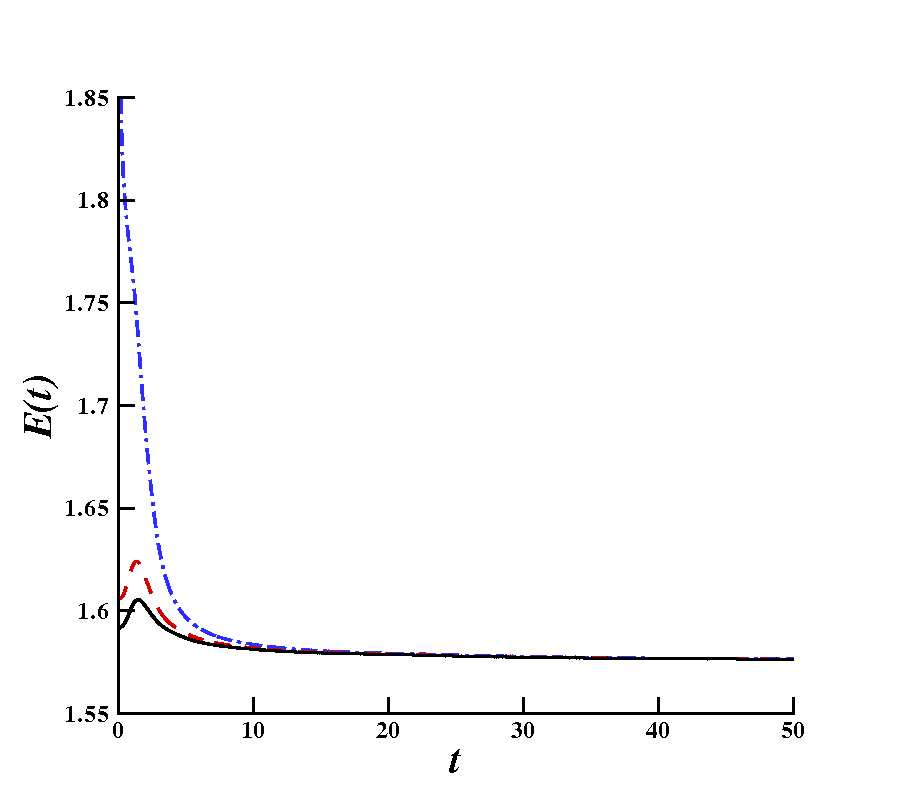}
\end{center} \end{minipage}\\
\begin{minipage}{0.48\linewidth}\begin{center} (a) \end{center} \end{minipage}
\begin{minipage}{0.48\linewidth}\begin{center} (b) \end{center} \end{minipage}
\caption{Here the energies for CFB are shown. $\int
\ub \cdot \ub$ \chain, $\int \ub \cdot \ubarb$ \dashed,  $\int \ubarb \cdot \ubarb$
\full.   In (a) the initial conditions are $u_0=C(x -\pi)/(1+100(x-\pi)^4) $, with
$C$ chosen such that $\max(u_0) =1$, and (b) with a random initial condition. 
In both cases the energies increase initially, but then behave with normal
energy decaying behavior.}
\label{positiveenergy}
\end{center}
\end{figure}

\begin{figure}[!ht]
\begin{center}
\includegraphics[width=0.9\linewidth]{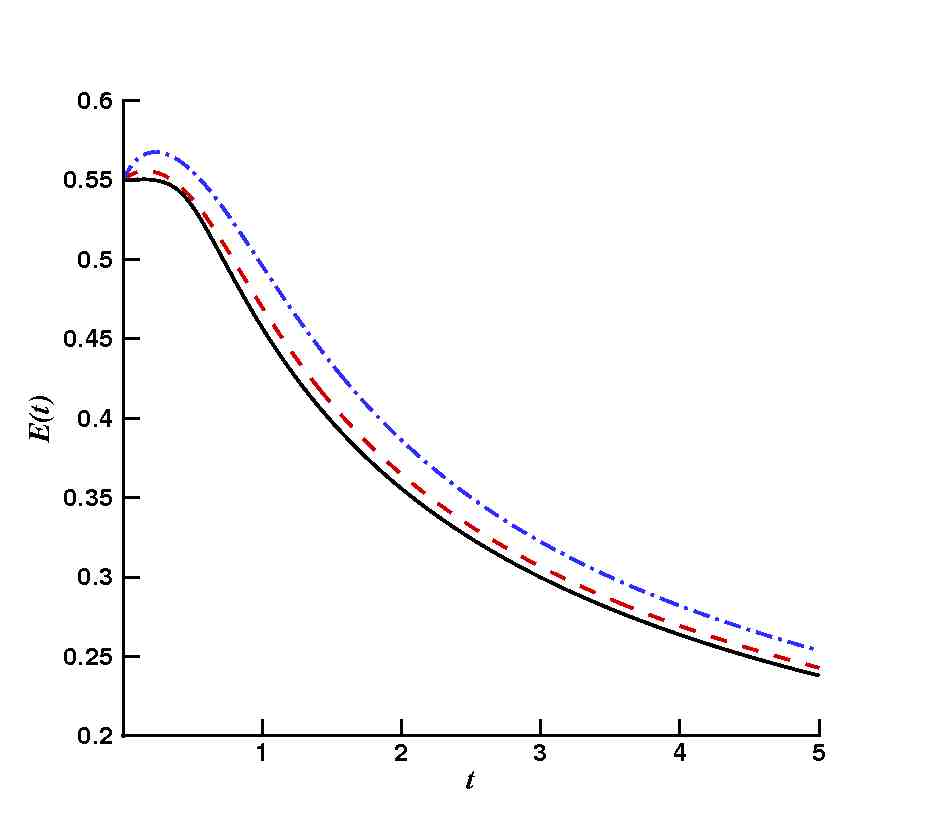}
\caption{ The energy $\int \ub \cdot \ub$ is compared for three different values of $\alpha$. $\alpha=0.08$ \chain, $\alpha=0.05$ \dashed, and $\alpha=0.02$ \full.  The initial conditions are $u_0=C(x -\pi)/(1+100(x-\pi)^4) $, with $C$ chosen such that $\max(u_0) =1$. For all three values of $\alpha$ there is a brief increase of energy, but as $\alpha$ becomes smaller, this increase in energy becomes less substantial.  }
\label{positiveenergy2}
\end{center}
\end{figure}

\section{Conclusion}
By passing the convective velocity through a low pass filter, the cascade of energy into higher wave modes of inviscid Burgers equation has been altered.  For a general class of filters, continuous initial conditions lead to continuous solutions for all time.  Much of the geometric structure seen in invariants of motion is preserved in the averaging process.  We can see that traveling wave solutions propagate at the correct wave speed, and converge to weak solutions of inviscid Burgers equation.  Through numerical simulations, much of the behavior of CFB can be examined.  The shock formation and behavior  of the unfiltered velocity appear similar to that of viscous Burgers equation.  The shock thickness can be regulated by the parameter $\alpha$, which is the characteristic width of the filter.  Spectral energy decompositions give insight into how the highwave modes are handled and what filters are needed to guarantee the smoothness of the solution.  Energy norms also compare favorably to those found in viscous Burgers equation.  All together, filtering the convective velocity appears to be a valid technique for high wavemode regularization and can hopefully be extended beyond Burgers equation into more general fluid dynamics equations.

\section{Acknowledgments}
The research in this paper was partially supported by the AFOSR contract FA9550-05-1-0334.

\bibliography{RefA1}
\bibliographystyle{unsrt}

\end{document}